\newcommand{\bfX}{\mathbf{X}}
\let\phi=\varphi
\newcommand{\calD}{\mathcal{D}}
\newcommand{\calI}{\mathcal{I}}
\newcommand{\calC}{\mathcal{C}}
\newcommand{\calT}{\mathcal{T}}
\newcommand{\nset}{\mathbb{N}}
\newcommand{\Zset}{\mathbb{Z}}
\newcommand{\numP}{\#{\bf P}}
\newcommand{\Ex}{\mathop{\mathbb{{}E}}\nolimits}
\newcommand{\scp}{{\sf Scp}}
\newcommand{\PRS}{\mathrm{PRS}}
\newcommand{\PRSforIS}{\mathrm{PRSforIS}}
\newcommand{\Sbar}{\overline S}
\newcommand{\eps}{\varepsilon}
\newcommand{\Garrow}{\vec G}
\renewcommand{\Pr}{\mathop{\mathrm{Pr}}\nolimits}
\tikzset{lab/.style={circle,thick,draw}}
\tikzstyle{vertex} = [draw,shape=circle,minimum size=4pt, inner sep=0pt, color=black]
\newtheorem{theorem}{Theorem}
\newtheorem{lemma}[theorem]{Lemma}
\newtheorem{observation}[theorem]{Observation}
\newtheorem{corollary}[theorem]{Corollary}
\newtheorem*{remark}{Remark}
\newtheorem{definition}[theorem]{Definition}
\crefname{theorem}{Theorem}{Theorems}
\crefname{observation}{Observation}{Observations}
\crefname{claim}{Claim}{Claims}
\crefname{condition}{Condition}{Conditions}
\crefname{algorithm}{Algorithm}{Algorithms}
\crefname{property}{Property}{Properties}
\crefname{example}{Example}{Examples}
\crefname{fact}{Fact}{Facts}
\crefname{lemma}{Lemma}{Lemmas}
\crefname{corollary}{Corollary}{Corollaries}
\crefname{definition}{Definition}{Definitions}
\crefname{remark}{Remark}{Remarks}
\crefname{proposition}{Proposition}{Propositions}
\crefname{equation}{equation}{equations}
\title{Fundamentals of Partial Rejection Sampling} 
\author{Mark Jerrum}
\address{School of Mathematical Sciences, Queen Mary, University of London, Mile End Road, London E1~4NS}
\thanks{This work was supported by EPSRC grant EP/S016694/1, ``Sampling in hereditary classes''.}
\begin{document}

\begin{abstract}
Partial Rejection Sampling is an algorithmic approach to obtaining a perfect sample from a specified distribution.  The objects to be sampled are assumed to be represented by a number of random variables.  In contrast to classical rejection sampling, in which all variables are resampled until a feasible solution is found, partial rejection sampling aims at greater efficiency by resampling only a subset of variables that `go wrong'.  Partial rejection sampling is closely related to Moser and Tardos' algorithmic version of the Lov\'asz Local Lemma, but with the additional requirement that a specified output distribution should be met.  This article provides a largely self-contained account of the basic form of the algorithm and its analysis.    
\end{abstract}

\maketitle

\section{The setting}
The aim of this expository article is to provide a uniform treatment of a particular approach to sampling combinatorial structures.  The method is a development of classical rejection sampling.  Suppose $\Phi(\bfX)$ is a predicate (Boolean function) depending on random variables $\bfX=(X_1,\ldots,X_n)$ coming from a product distribution. We would like to obtain a sample from the conditional distribution of~$\bfX$ given that $\Phi(\bfX)$ holds.  Classical rejection sampling repeatedly generates realisations of~$\bfX$ from the product distribution until one that satisfies $\Phi$ is found, and then outputs that.  In many situations this approach is very inefficient, as satisfying assignments to $\Phi$ may occur with exponentially small probability (in~$n$).  The idea in Partial Rejection Sampling (PRS) is to identify small subsets of the variables that violate $\Phi$ (in some sense), and resample just those variables.  Clearly, the choice of which variables to resample has to be done with care if the output distribution is to remain the correct one.  

The phrase `partial rejection sampling' appears to have been coined by Cohn, Pemantle and Propp~\cite{CPP02} to describe their approach to sampling sink-free orientations.  Noting the similarity to Wilson's approach to sampling spanning trees~\cite{PW98}, they wondered whether one could develop a general theory.  PRS as a general algorithmic technique was explored by Guo, Jerrum and Liu~\cite{GJL19}, and is our topic here.

We focus on examples of PRS inspired by algorithmic proofs of the Lov\'asz Local Lemma (LLL).  There is a substantial literature on this topic to assist us, but it is concerned only with problem of constructing \emph{some} satisfying assignment to~$\Phi$.  Partial rejection sampling adds the novel requirement that the output should be uniform (or, more generally, from the desired distribution) on satisfying assignments.  This additional requirement adds a new challenge.   

In the study of the LLL, the class of `extremal' instances receives particular attention.  The extremal instances are particularly suited to PRS and we treat them first.  It is unlikely that anything in this section of the article is conceptually new.  However, the treatment of PRS in the extremal regime involves some particularly beautiful combinatorial ideas, and it is seems a good time to bring this material together in one place, with a consistent approach and notation.    

After that, we investigate to what extent the conditions defining extremal instances can be relaxed.  The viewpoint taken in this section is novel to a certain extent.  Care has been taken to set out the conditions under which PRS continues to function in the non-extremal setting, in the hope that it will help in discovering new applications.  More than usual attention is paid to the flexibility in the order in which variables can be resampled.

The scope of this article is limited to versions of partial rejection sampling that stay close to the spirit of the algorithmic LLL pioneered by Moser and Tardos~\cite{MT10}.  In particular, the number of random variables is finite and all constraints on them are `hard'.  Informally, we restrict attention to a `combinatorial' setting, which excludes important applications to spin systems in statistical physics.  We finish with a few pointers to work that goes beyond the framework presented here.  

It should be noted that PRS is not the only approach to perfect sampling.  Perhaps the best known and most extensively studied is `Coupling From The Past' (CFTP), which was pioneered by Propp and Wilson~\cite{PW98}. Other approaches include Fill and Huber's `Randomness recycler'~\cite{FillHuber} and Anand and Jerrum's `Lazy  depth-first sampler'~\cite{AnandJerrum}.

\section{Partial rejection sampling in the extremal setting}\label{sec:extremal}

Suppose $\Phi=\phi_1\wedge\phi_2\wedge\cdots\wedge\phi_m$ is formula on variables $\bfX=(X_1,X_2,\ldots,X_n)$.
Each clause~$\phi_k$ depends on a certain tuple $(X_{i_1},\ldots,X_{i_{a_k}})$ of variables, where $a_k$ is the \emph{arity} of~$\phi_k$.  We refer to the tuple $\scp(\phi_k)=(i_1,\ldots,i_{a_k})$ of indices as the \emph{scope} of the clause~$\phi_k$.  By assuming $i_1<i_2<\cdots<i_{a_k}$ we can consider the scope $\scp(\phi_k)$ to be a $a_k$-tuple or a set of cardinality~$a_k$, according to context.  For a set $S\subseteq\{1,\ldots,n\}$ of indices we write $X_S=\{X_i:i\in S\}$.  Then to emphasise the dependence on the variables we can write 
$$\Phi(\bfX)=\phi_1(X_{\scp(\phi_1)})\wedge\cdots\wedge \phi_m(X_{\scp(\phi_m)}).$$
\begin{definition}\label{def:extremal}
We say that the formula $\Phi=\phi_1\wedge\cdots\wedge\phi_m$ is \emph{extremal} if, for all $1\leq k<\ell\leq m$ satisfying $\scp(\phi_k)\cap \scp(\phi_\ell)\not=\emptyset$, it is the case that $\phi_k(X_{\scp(\phi_k)})\vee\phi_\ell(X_{\scp(\phi_\ell)})$ is a tautology.  In other words, any two clauses that are probabilistically dependent cannot both be false.  
\end{definition}
In this section we consider only extremal instances, as these can be dealt with using the basic form of partial rejection sampling.

Now suppose that variable $X_i$ takes values from a set~$D_i$.  Equip $D_i$ with a probability distribution and call the probability space $\calD_i$.  We are interested in sampling a realisation of the random variable $\bfX$ from the product distribution $\calD=\calD_1\times\calD_2\times\cdots\times\calD_n$ conditioned on $\Phi(\bfX)$ holding.  Denote this desired distribution by $\calD_\Phi$.  Partial Rejection Sampling (PRS) is a simple algorithm for accomplishing this task in the context of extremal instances.  It runs as follows.

\begin{algorithm}
\caption{Partial Rejection Sampling}\label{alg:PRS}
\begin{algorithmic}
\STATE $\PRS(\Phi,\calD)$ 
\COMMENT {$\Phi$ is a formula on variable set $\bfX$}
\STATE {Sample $\bfX$ from the product distribution $\calD=\calD_1\times\cdots\times\calD_n$}
\WHILE {$\neg\Phi(\bfX)$}
\STATE {Choose any clause $\phi_k$ with $\neg\phi_k(X_{\scp(\phi_k)})$}  
\STATE {Resample all variables in $\scp(\phi_k)$}
\ENDWHILE
\end{algorithmic}
\end{algorithm}

In the resampling step, the product distribution $\prod_{i\in\scp(\phi_k)}\calD_i$ is naturally being used.  

The algorithm PRS was first introduced by Moser and Tardos~\cite{MT10} in the context of an algorithmic proof of the Lov\'asz Local Lemma.  Its application to sampling from naturally specified distributions was studied by Guo, Jerrum and Liu~\cite{GJL19}, who analysed its correctness and efficiency.  Although their investigation seems to be the first attempt to treat PRS as a general technique, several specific examples had previously appeared in the literature as we noted above.

\begin{remark}
In the combinatorial community, the Moser and Tardos algorithm would be viewed as gradually eliminating the set of `bad events' until none are left.  In the area of constraint satisfaction, the goal is to simultaneously satisfy a collection of constraints.  It is important to keep in mind that, of these diametrically opposing conventions, we use the latter here.
\end{remark}

In classical rejection sampling we would resample the whole of $\bfX$ on each iteration.  In contrast, PRS resamples only a subset of offending variables.  We cannot expect the correctness of the algorithm to survive such extreme corner cutting.  Indeed, for general formulas~$\Phi$, the call $\PRS(\Phi,\calD)$ does not produce a sample from the distribution~$\calD_\Phi$.  Surprisingly, PRS \emph{does} achieve the desired distribution for extremal instances.

\begin{theorem}\label{thm:PRScorrect}
Suppose $\Phi$ is a satisfiable extremal instance.  Then $\PRS(\Phi,\calD)$ terminates with probability~1.  On termination, $\bfX$ is a realisation of a random variable from the distribution~$\calD_\Phi$.
\end{theorem}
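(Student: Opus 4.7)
The plan is to establish a structural invariant about the conditional law of $\bfX$ at each step of the algorithm, from which both correctness and almost-sure termination follow. The key preliminary step is to record two direct implications of the extremal hypothesis. (i)~At any time during the execution, the scopes of the clauses in the current violated set $\calC$ are pairwise disjoint, since two overlapping violated clauses would falsify the tautology condition. (ii)~Whenever $\phi_k$ is violated and $\phi_j$ shares any variable with $\phi_k$, the clause $\phi_j$ must be satisfied, regardless of the values of the variables lying outside $\scp(\phi_k)$; this too is immediate from the tautology condition. Observation~(ii) will be the linchpin of the main invariant.

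Write $\pi_\calC$ for $\calD$ conditioned on the event $\calE(\calC)$ that the set of violated clauses equals $\calC$. The main invariant is that, conditional on the entire history up to the start of iteration $t$, the current $\bfX$ has distribution $\pi_{\calC_t}$, where $\calC_t$ is the (history-determined) current violated set. I would prove this by induction on $t$. The base case is immediate. For the inductive step, suppose the invariant holds with $\calC_t = \calC$, and condition further on the chosen clause $\phi_k \in \calC$. Split the variables into \emph{inside} $= \scp(\phi_k)$ and \emph{outside} $= \{1, \dots, n\} \setminus \scp(\phi_k)$. By observation~(ii), under $\pi_\calC$ the conditioning that the straddling clauses are satisfied is automatic once $\phi_k$ is violated, so the conditional law of the inside variables given the outside reduces to $\prod_{i \in \scp(\phi_k)} \calD_i$ further conditioned on $\phi_k$ being violated, and crucially this conditional law does not depend on the outside values. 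The resample step replaces these inside variables by a fresh unconditional draw from $\prod_{i \in \scp(\phi_k)} \calD_i$, so the new joint law factorises as the outside marginal of $\pi_\calC$ times that product. A short bookkeeping calculation, again invoking observation~(ii) to handle straddling clauses, then shows that for each possible new violated set $\calC'$, conditioning the new joint law on $\calC_{t+1} = \calC'$ recovers $\pi_{\calC'}$. I expect this decoupling-plus-bookkeeping argument to be the main obstacle, and it is precisely where extremality pays its way.

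Once the invariant is established, correctness is immediate: on termination $\calC_t = \emptyset$, and the invariant identifies the output distribution with $\pi_\emptyset = \calD_\Phi$, as required. For almost-sure termination, satisfiability of $\Phi$ gives $p^\ast := \Pr_\calD[\Phi(\bfX)] > 0$; a Moser--Tardos-style resampling-table argument (streamlined here because observation~(i) ensures disjoint scopes among simultaneously violated clauses) then shows that the expected number of iterations is finite, and hence the algorithm terminates with probability one.
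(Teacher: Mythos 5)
Your argument is correct in substance, but it takes a genuinely different route from the paper's. The paper works entirely with the resampling table: it shows (Lemma~\ref{lem:confluence}) that the final transcript is a deterministic function of the table, independent of the nondeterministic choices, and is unchanged when one satisfying assignment is substituted for another in the final frontier; conditioning on the transcript then yields the distribution $\calD_\Phi$. You instead establish the ``conditional Gibbs property'' in the style of Guo, Jerrum and Liu~\cite{GJL19}: conditional on the history, the current state has law $\calD$ restricted to the event that the violated set equals $\calC_t$. Your inductive step is sound: by extremality the event $\{\text{violated set}=\calC\}$ factors as $\neg\phi_k(X_{\scp(\phi_k)})$ intersected with an event on the complementary variables (satisfaction of the straddling clauses is implied by $\neg\phi_k$, and the other violated clauses have disjoint scopes), so replacing $X_{\scp(\phi_k)}$ by a fresh unconditioned draw and then conditioning on the new violated set $\calC'$ recovers $\pi_{\calC'}$, since $\calE(\calC')$ is contained in the surviving outside event. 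Your route avoids the trace/\emph{empilement} machinery and is the one that extends to the non-extremal setting of Section~\ref{sec:nonextremal}; the paper's transcript route buys, essentially for free, the exact runtime formula of Theorem~\ref{thm:kresamplings}.

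Two caveats. First, your induction conditions on the identity of the chosen clause; this preserves $\pi_{\calC_t}$ only if the selection rule is measurable with respect to the history of violated sets (or uses randomness independent of $\bfX$). Algorithm~\ref{alg:PRS} permits an arbitrary choice, possibly depending on the actual variable values, and for such rules conditioning on the choice can distort the law; to cover that generality you would need either to restrict the rule or to append a confluence argument as the paper does. Second, termination does not follow from $\Pr_\calD(\Phi)>0$ by a rejection-sampling count, since only part of the state is refreshed each round; the clean argument is the paper's: fix a satisfying assignment $\mathbf b$ and observe that, with probability bounded away from zero, $n$ consecutive resamplings each assign every refreshed variable its $\mathbf b$-value, strictly decreasing the Hamming distance to $\mathbf b$ and forcing termination within $n$ steps.
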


To analyse the algorithm, we need to introduce time explicitly.  A \emph{resampling table} is a semi-infinite matrix $(\omega_{i,j}:1\leq i\leq n\text{ and } j\in\nset)$.  Each entry $\omega_{i,j}$ in the table is an independent sample from the distribution~$\calD_i$.  Fixing $i$, the sequence $\omega_{i,0}, \omega_{i,1},\omega_{i,2},\ldots$ specifies the sequence of values taken by the random variable~$X_i$ during the execution of the algorithm.  Introducing a superscript to indicate the time~$t$ (each iteration of the loop takes one time unit), we write $X_i^t=\omega_{i,j(i,t)}$.  If $X_i$ is resampled during iteration~$t$ then $j(i,t)=j(i,t-1)+1$, otherwise $j(i,t)=j(i,t-1)$.  Initially, $j(i,0)=0$ for all $1\leq i\leq n$.  (By convention, we start at time~0, and iteration $t$ occupies the interval between time $t-1$ and time~$t$.)  At any time $t$, the \emph{frontier} of the resampling table is $F(t)=(j(1,t),j(2,t),\ldots, j(n,t))$.  See Figure~\ref{fig:resampling}.

\begin{figure}[t]
\renewcommand{\arraystretch}{1.3}
\centering
\begin{tabular}{|c c c c c|} 
$\vdots$&$\vdots$&$\vdots$&$\vdots$&$\vdots$\\
$\omega_{1,4}$&$\omega_{2,4}$&$\omega_{3,4}$&$\omega_{4,4}$&$\omega_{5,4}$\\
$\omega_{1,3}$&$\omega_{2,3}$&$\omega_{3,3}$&$\omega_{4,3}$&$\omega_{5,3}$\\
$\omega_{1,2}$&$\omega_{2,2}$&$\omega_{3,2}$&$\omega_{4,2}$&$\omega_{5,2}$\\
$\omega_{1,1}$&$\omega_{2,1}$&$\omega_{3,1}$&$\omega_{4,1}$&$\omega_{5,1}$\\
$\omega_{1,0}$&$\omega_{2,0}$&$\omega_{3,0}$&$\omega_{4,0}$&$\omega_{5,0}$\\
\hline
$X_1$&$X_2$&$X_3$&$X_4$&$X_5$\\
\hline
\end{tabular}
\caption{A resampling table}
\label{fig:resampling}
\end{figure}

As time progresses, we record the actions of the algorithm in the form of a partition of the portion of the resampling table that lies behind the frontier, namely $(\omega_{i,j}:1\leq i\leq n\text{ and }0\leq j<j(i,t))$.  The partition builds as the frontier advances.  In iteration $t$, the variables in the scope of some clause $\phi_k$ are resampled.  The locations $\{(i,j(i,t-1)):i\in \scp(\phi_k))\}$ that were on the frontier now lie behind it;  this set of locations now forms a new block of the partition.  We call these blocks the \emph{resampling blocks}. At time~$t$, the frontier together with the partition into resampling blocks forms a \emph{transcript} of the run of the algorithm up to time~$t$.  

By way of example, consider the formula 
\begin{equation}\label{eq:Phi}
\Phi(\bfX)=(X_1\vee X_2)\wedge(\neg X_1\vee X_3\vee\neg X_4)\wedge(\neg X_2\vee\neg X_3\vee X_5)\wedge(X_4\vee \neg X_5)
\end{equation}
on variables $\bfX=(X_1,X_2,X_3,X_4,X_5)$.  (The formula $\Phi$ encodes sink-free orientations of a certain 4-vertex graph, a point we shall return to later.)  Thus $\phi_2=\neg X_1\vee X_3\vee\neg X_4$ and $\scp(\phi_2)=\{1,3,4\}$, and similarly for the other clauses.  A particular realisation of the resampling table that leads to termination of the algorithm PRS, together with its associated transcript, are depicted in Figure~\ref{fig:transcript}.  In the pictorial representation of the transcript, the values in the resampling table are spread out along the columns so that each resampling block of the transcript occupies a single row.  The rectangle at the top denotes the \emph{final frontier}, i.e., the frontier at termination. Initially, $X_4=0$ and $X_5=1$, which violates clause~$\phi_5$.  Accordingly, variables $X_4$ and $X_5$ are resampled, and $\{(4,0),(5,0)\}$ becomes the first resampling block of the transcript.  The value of $X_4$ switches from 0 to~1, and this causes clause $\phi_2$ to be violated, since now $X_1=X_4=1$ and $X_3=0$.  So $X_1$, $X_3$ and $X_4$ are resampled and $\{(1,0),(3,0),(4,1)\}$ becomes the next resampling block of the transcript.  Eventually, $\bfX=(0,1,0,1,1)$ , which satisfies~$\Phi$, and the algorithm halts.  

Suppose we run the algorithm PRS twice, using different non-deterministic choices (of which clauses to resample), until termination.  A priori, it might be imagined that the two runs would in general have different transcripts, but this is not the case, as we shall see in Lemma~\ref{lem:confluence}.  Some intuition can be gained from Figure~\ref{fig:transcript}. At time~4, $\bfX=(0,0,0,0,1)$, and hence clauses $\phi_1$ and $\phi_4$ are both violated.  We can resample either $\{X_1,X_2\}$ first or $\{X_4,X_5\}$, but either way we end up with the \emph{same} transcript.  In this context, it is crucial that $\scp(\phi_1)\cap\scp(\phi_4)=\emptyset$, but in an extremal instance, this condition is guaranteed. 

\begin{figure}[t]

\begin{minipage}[c]{0.44\linewidth}
\vglue 0pt
\renewcommand{\arraystretch}{1.3}
\centering
\begin{tabular}{|c c c c c|} 
$\vdots$&&&$\vdots$&\\
0&$\vdots$&$\vdots$&1&$\vdots$\\
1&1&0&1&1\\
0&0&0&0&1\\
0&1&1&1&0\\
1&0&0&0&1\\
\hline
$X_1$&$X_2$&$X_3$&$X_4$&$X_5$\\
\hline
\end{tabular}
\end{minipage}%
\begin{minipage}[c]{0.55\linewidth}
\vglue 0pt
\centering
\begin{tikzpicture}[xscale=0.08, yscale=0.06, line width=3pt, inner sep=2pt]]
    \draw (40,0)  node[lab] (a4) {0}; \draw (50,0)  node[lab] (a5) {1};
    \draw (10,10) node[lab] (b1) {1}; \draw (30,10) node[lab] (b3) {0}; \draw (40,10) node[lab] (b4) {1};
    \draw (10,20) node[lab] (c1) {0}; \draw (20,20) node[lab] (c2) {0};
    \draw (20,30) node[lab] (d2) {1}; \draw (30,30) node[lab] (d3) {1}; \draw (50,30) node[lab] (d5) {0};
    \draw (10,40) node[lab] (e1) {0}; \draw (20,40) node[lab] (e2) {0}; 
    \draw (40,40) node[lab] (e4) {0}; \draw (50,40) node[lab] (e5) {1};
    \draw (10,50) node[lab] (f1) {1}; \draw (30,50) node[lab] (f3) {0}; \draw (40,50) node[lab] (f4) {1};
  
    \draw (-2,60) node[text width = 2cm] () {$t=7$};  
    \draw (-2,50) node[text width = 2cm] () {$t=6$};
    \draw (-2,40) node[text width = 2cm] () {$t=4,5$};
    \draw (-2,30) node[text width = 2cm] () {$t=3$};
    \draw (-2,20) node[text width = 2cm] () {$t=2$};
    \draw (-2,10) node[text width = 2cm] () {$t=1$};
    \draw (-2,0)  node[text width = 2cm] () {$t=0$};

    \draw (73,50) node[text width = 2cm] () {$\phi_2$};
    \draw (73,40) node[text width = 2cm] () {$\phi_1,\phi_4$};
    \draw (73,30) node[text width = 2cm] () {$\phi_3$};
    \draw (73,20) node[text width = 2cm] () {$\phi_1$};
    \draw (73,10) node[text width = 2cm] () {$\phi_2$};
    \draw (73,0)  node[text width = 2cm] () {$\phi_4$};

    \draw (11,-10) node () {$X_1$};
    \draw (21,-10) node () {$X_2$};
    \draw (31,-10) node () {$X_3$};
    \draw (41,-10) node () {$X_4$};
    \draw (51,-10) node () {$X_5$};
    
    \draw (10,60) node () {0};
    \draw (20,60) node () {1};
    \draw (30,60) node () {0};
    \draw (40,60) node () {1};
    \draw (50,60) node () {1};
    
    \draw[thick] (5,56) rectangle (55,64) {};
    
    \draw[gray] (a4) -- (a5);
    \draw[gray] (b1) -- (b3) -- (b4);
    \draw[gray] (c1) -- (c2);
    \draw[gray] (d2) -- (d3) -- (d5);
    \draw[gray] (e1) -- (e2);
    \draw[gray] (e4) -- (e5);
    \draw[gray] (f1) -- (f3) -- (f4);
\end{tikzpicture}
\end{minipage}
\caption{A realisation of a resampling table, and the corresponding transcript}
\label{fig:transcript}
\end{figure}

\begin{lemma}\label{lem:confluence}Let $\Phi$ be an extremal formula.  Fix a resampling table.  Suppose that for some sequence of non-deterministic choices, $\PRS(\Phi,\calD)$ terminates with a certain transcript.  Then for any other sequence of choices, the algorithm will terminate with the same transcript.  
\end{lemma}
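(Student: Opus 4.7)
The plan is to exploit a single combinatorial fact coming from extremality: at any moment during execution, any two simultaneously violated clauses must have disjoint scopes. Indeed, if $\scp(\phi_k)\cap\scp(\phi_\ell)\neq\emptyset$ then $\phi_k\vee\phi_\ell$ is a tautology, so at most one of them can be false. From this I would deduce a local commutativity property: if $\phi_k$ and $\phi_\ell$ are both violated in state $\bfX$ and we resample them against the fixed resampling table, then the two resamplings commute both at the level of states and at the level of the partial transcript, since they use disjoint sets of cells (disjoint columns of the table) and neither operation alters any variable in the other's scope.

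The heart of the argument is an exchange lemma. Fix the resampling table and let $R=(\phi_{k_1},\ldots,\phi_{k_N})$ be the clause sequence of the given terminating run. I would prove by induction on $s\geq 0$ that for any alternative sequence of non-deterministic choices $R'$, the first $s$ clauses chosen by $R'$ form a prefix of some permutation $\pi$ of $R$ obtainable from $R$ by successive swaps of adjacent commuting operations. The base case $s=0$ is trivial. For the inductive step, let $\phi$ be the clause $R'$ chooses at step $s+1$; it is violated at the state reached after $s$ steps. Since $\pi$'s state after $s$ steps matches $R'$'s and $\pi$'s full execution terminates with $\Phi$ satisfied, $\phi$ must be resampled at some later position $j\geq s+1$ in $\pi$ (the only other way $\phi$ could become satisfied would be via some intermediate operation touching $\scp(\phi)$, but such an operation together with $\phi$ would constitute two violated clauses with overlapping scopes, contradicting extremality).

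The crux, and the main obstacle, is then showing that every operation at position $t$ with $s+1\leq t<j$ in $\pi$ has scope disjoint from $\scp(\phi)$. Taking the smallest offending $t$, all operations in positions $s+1,\ldots,t-1$ leave the variables in $\scp(\phi)$ untouched, so $\phi$ is still violated just before step $t$; but then $\phi_{k_t}$ and $\phi$ are two simultaneously violated clauses with overlapping scopes, again contradicting extremality. With no such $t$, I can move $\phi$ leftward in $\pi$ from position $j$ past positions $j-1, j-2,\ldots, s+1$ by a sequence of adjacent commuting swaps, yielding a new permutation $\pi'$ of $R$ whose first $s+1$ clauses agree with $R'$. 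This closes the induction.

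The inductive claim immediately gives the lemma: $R'$'s length is bounded by $N$, after each step its state matches the corresponding permuted execution of $R$, and so at step $N$ the state satisfies $\Phi$ and the while-loop terminates. Since each adjacent commuting swap preserves both the cells consumed and the resampling blocks they form, the transcript produced by $R'$ coincides with that of $R$.
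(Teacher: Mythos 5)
Your proof is correct, but it takes a genuinely different route from the paper. The paper converts the problem into an abstract rewriting system on transcripts, verifies the \emph{diamond property} (two simultaneously applicable resamplings have disjoint scopes by extremality, so either order leads to a common successor), and then invokes Eriksson's strengthening of Newman's Lemma to pass from this local property to global strong convergence. You instead give a self-contained exchange argument: anchoring to the given terminating run $R$ of length $N$, you show by induction that any alternative run is, step by step, a prefix of a permutation of $R$ reachable by adjacent commuting swaps, using the key observation that the first later operation in $\pi$ touching $\scp(\phi)$ must be $\phi$ itself (else two violated clauses with overlapping scopes would coexist, contradicting extremality), so $\phi$ can be bubbled to the front. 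What the paper's route buys is modularity: one only checks a one-step local condition and the cited lemma silently handles the global issues, including the possibility of non-terminating branches (case (b) of the polygon property). What your route buys is self-containedness and more information: it exhibits explicitly that any two terminating runs differ by a sequence of adjacent commuting transpositions, which is precisely the trace/\emph{empilement} correspondence the paper develops afterwards (indeed, your bubbling step reappears almost verbatim in the paper's parenthetical argument that two words yielding the same transcript are commutation-equivalent). Two minor points you leave implicit but which do hold: the alternative run cannot terminate before step $N$, since its state after $s<N$ steps agrees with $\pi$'s, whose next clause is still violated; and each swap preserves validity because an operation whose scope is disjoint from $\scp(\phi)$ remains violated whether or not $\phi$ has been resampled.
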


To prove this lemma, we use a version of Newman's Lemma that is particularly convenient in this application.  An (abstract) rewriting system is simply a set~$\calT$ of `positions' together with a binary `rewriting' relation $\to$ on~$T$.  For positions $t,s\in\calT$, the relation $t\to s$ indicates that it is possible to go from $t$ to~$s$ in one move.  A position~$t$ from which no valid move $t\to s$ is possible is said to be \emph{terminal}.  A sequence of moves ending at a terminal state is said to be \emph{terminating}.  Following Eriksson~\cite{Eriksson96}, we say that the rewriting system $(\calT,\to)$ has the \emph{polygon property} if, given any position $t\in\calT$ and two moves $t\to s$ and $t\to s'$, either (a)~there are two sequences $s=s_0\to s_1\to\cdots\to s_\ell=t^*$ and $s'=s'_0\to s'_1\to\cdots\to s'_\ell=t^*$ of the same length~$\ell$ that end at the same position~$t^*$, or (b)~there are two infinite sequences of moves starting from $s$ and~$s'$.  A rewriting system is said to have the \emph{strong convergence property} if, for any starting position~$t$ from which there exists a sequence of moves terminating at some position~$t^*$, it is the case that every sequence of moves starting from~$t$ will lead to~$t^*$, and in the same number of moves.  Eriksson~\cite[Thm~2.1]{Eriksson96} showed the following.

\begin{lemma}  \label{lem:Newman}
A rewriting system has the strong convergence property iff it has the polygon property.
\end{lemma}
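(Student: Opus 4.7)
The plan is to prove the two implications separately. The direction strong convergence $\Rightarrow$ polygon property is essentially a case split. Fix a position $t$ with moves $t\to s$ and $t\to s'$. If some sequence of moves starting from $t$ terminates, strong convergence says every sequence from $t$ terminates at the same position $t^*$ in the same number of moves, say $N$. Then the continuations from $s$ and $s'$ that extend the two given first moves terminate at $t^*$ in $N-1$ moves each, giving alternative~(a) with $\ell=N-1$. If no sequence from $t$ terminates, then no sequence from $s$ can terminate (prepending $t\to s$ would give a terminating sequence from $t$), and likewise for $s'$. Since every non-terminal position admits a move by definition, we can greedily extend indefinitely, yielding infinite sequences from $s$ and $s'$, i.e.\ alternative~(b).

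For the harder direction polygon $\Rightarrow$ strong convergence, the plan is induction on $N(t)$, the length of the shortest terminating sequence starting at $t$ (only defined when some such sequence exists). The goal is to show: whenever $N(t)$ is defined, every sequence of moves from $t$ terminates, at the same position $t^*$, after exactly $N(t)$ moves. The base case $N(t)=0$ is immediate since $t$ is then terminal. For the inductive step, fix a shortest terminating sequence $t\to s_1\to\cdots\to s_N=t^*$ so that $N(s_1)=N-1$, and let $t\to s'_1$ be an arbitrary alternative first move.

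Apply the polygon property at $t$ to the two moves $t\to s_1$ and $t\to s'_1$. The infinite-sequence alternative (b) is incompatible with the inductive hypothesis at $s_1$, which rules out infinite sequences from $s_1$. Hence alternative (a) holds: there exist sequences $s_1\to\cdots\to t^{**}$ and $s'_1\to\cdots\to t^{**}$ of common length $\ell$. The inductive hypothesis applied to $s_1$ forces the first of these to reach $t^*$ in $N-1$ moves, so $t^{**}=t^*$ and $\ell=N-1$. This shows $s'_1$ has a terminating sequence, and by minimality $N(s'_1)\le N-1<N$, so the inductive hypothesis applies to $s'_1$ as well, giving that every continuation from $s'_1$ ends at $t^*$ in $N-1$ moves. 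Since $s'_1$ was arbitrary, every sequence from $t$ ends at $t^*$ in $N$ moves, completing the induction.

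The main subtlety I foresee is arranging the induction properly: one must use induction on the quantity $N(t)$ itself (equivalently, strong induction on the shortest terminating-sequence length) rather than on some structural measure on $t$, because the rewriting system is not assumed to be well-founded or finitely branching. In particular, the argument must apply the inductive hypothesis both to $s_1$ (where it rules out the infinite-sequence case of the polygon property) and to $s'_1$ (to propagate strong convergence to arbitrary first moves), and it is essential that the polygon property delivers equal-length paths so that we can conclude $N(s'_1)\le N-1$ and trigger the induction a second time. The other step deserving care is the extraction of infinite sequences in the easy direction, which uses only that a non-terminal position admits at least one move.
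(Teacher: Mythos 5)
The paper does not prove this lemma at all---it simply cites Eriksson~\cite[Thm~2.1]{Eriksson96}---so your self-contained argument is necessarily ``new'' relative to the text. Your overall strategy is the right one and matches the standard proof: the forward direction is a routine case split, and the reverse direction is a strong induction on $N(t)$, the length of a shortest terminating sequence, using the polygon property to transfer termination from the distinguished first move $t\to s_1$ to an arbitrary first move $t\to s_1'$. Your remarks about why the induction must be on $N(t)$ rather than on a structural measure, and about why alternative (b) is excluded by the inductive hypothesis at $s_1$, are exactly the points that matter. The easy direction is correct as written.

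There is, however, one deduction in the hard direction that is stated too strongly. The polygon property only asserts that the two sequences from $s_1$ and $s_1'$ end at a \emph{common} position $t^{**}$ of common length $\ell$; it does not assert that $t^{**}$ is terminal. So the claim ``$t^{**}=t^*$ and $\ell=N-1$'' does not follow: $t^{**}$ may be an interior position reached after $\ell<N-1$ steps. The repair is short. The inductive hypothesis at $s_1$ (with $N(s_1)=N-1$) shows that the $\ell$-step path $s_1\to\cdots\to t^{**}$ extends to a terminating sequence, so $\ell\le N-1$, $t^{**}$ has a terminating continuation of length $N-1-\ell$ ending at $t^*$, and appending that continuation to the path $s_1'\to\cdots\to t^{**}$ yields a terminating sequence from $s_1'$ of length $N-1$ ending at $t^*$. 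This gives $N(s_1')\le N-1$, so the inductive hypothesis applies to $s_1'$; and since any terminating sequence from $s_1'$ prepended with $t\to s_1'$ terminates from $t$, minimality of $N(t)$ forces $N(s_1')\ge N-1$, hence $N(s_1')=N-1$ exactly and every maximal sequence from $s_1'$ ends at $t^*$ in $N-1$ moves (you leave this lower bound implicit, but it is needed to conclude ``in exactly $N$ moves'' for $t$). With these two small patches the induction closes and the proof is correct.
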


\begin{proof}[Proof of Lemma~\ref{lem:confluence}]
Fix a resampling table.  View the collection of all possible transcripts as an abstract rewriting system by introducing a binary relation $\to$ on transcripts.  The meaning of $t\to s$ is that $s$ can follow $t$ in one iteration of $\PRS$.  This rewriting system has the diamond property, namely if $t\to s$ and $t\to s'$ then there exists $t^*$ such that $s\to t^*$ and $s'\to t^*$.  (This diamond property is clearly stronger than the polygon property.)  For suppose $t\to s$ is a result of resampling the variables in scope $\scp(\phi_k)$, and $t\to s'$ the result of resampling $\scp(\phi_\ell)$.  Since the instance~$\Phi$ is extremal we know that $\scp(\phi_k)\cap \scp(\phi_\ell)=\emptyset$.  Thus, we can resample whichever scope was not resampled in the first step, to get to a common transcript~$t^*$.  The result now follows from Lemma~\ref{lem:Newman}.
\end{proof}

\begin{proof}[Proof of Theorem~\ref{thm:PRScorrect}]
Fix a particular satisfying assignment $\bfX=\mathbf{b}=(b_1,\ldots,b_n)$ to $\Phi$.  At any point in the execution of the algorithm, the following fortuitous sequence of events may occur over the next $n$ iterations:  each time a variable $X_i$ is resampled, it is assigned the value~$b_i$.  On each iteration, $\bfX$ approaches closer to $\mathbf b$ in Hamming distance.  Thus, the algorithm will terminate in the next $n$~iterations.  Since this fortuitous sequence of events occurs with probability bounded away from~0, the running time of the algorithm PRS is stochastically dominated by an exponential random variable with finite mean. So the algorithm terminates with probability~1.  

Fix a resampling table $T$, and run $\PRS$ on $T$ to obtain a transcript.  Since the algorithm has terminated, we know that the frontier contains a satisfying assignment.  Create a new resampling table~$T'$ by replacing the values in the frontier by some other satisfying assignment.   Now run the algorithm again on $T'$ with the same nondeterministic choices of scopes to resample.  Note that this is always possible:  whenever the algorithm running on~$T$ resamples $\scp(\phi_k)$ at time~$t$ it is because $\phi_k(\omega_{i,j(i,t)}:i\in\scp(\phi_k))$ is false.  None of the resampled variables are in the final frontier, since no variables beyond the final frontier are ever inspected.  So the clause $\phi_k$ is also false when the algorithm is run on table~$T'$, and it is valid step to resample $\scp(\phi_k)$.  Finally, on the same iteration that the algorithm terminates when run on table~$T$, it will also terminate on~$T'$.  The same transcript (i.e., frontier $F(t)$ together with the partition of the table behind the frontier) arises from running the algorithm on~$T'$ as the one that arose from the run on~$T$.  

By Lemma~\ref{lem:confluence} any sequence of non-deterministic choices made by the algorithm on table~$T'$ leads to the same transcript.  Summarising, the final transcript does not depend on the nondeterministic choices made by the algorithm, and is also unchanged if one satisfying assignment is substituted for another in the final frontier.  Thus, conditioned on the transcript, each satisfying assignment $\bfX=(b_1,\ldots,b_n)$ of $\Phi$ occurs with  probability proportional to $\calD_1(b_1)\calD_2(b_2)\cdots\calD_n(b_n)$.   So, at termination, $\bfX$ is distributed as $\calD_\Phi$. 
\end{proof}

\begin{figure}[t]
\centering
\begin{tikzpicture}[xscale=0.12, yscale=0.12, inner sep=1pt]

    \draw (0,10)  node[lab] (v1) {$1$}; 
    \draw (10,20) node[lab] (v2) {$2$};
    \draw (10,0)  node[lab] (v3) {$3$}; 
    \draw (20,10) node[lab] (v4) {$4$}; 
    
    \draw[thick] (v1) -- (v2);
    \draw[thick] (v1) -- (v3);
    \draw[thick] (v2) -- (v3);
    \draw[thick] (v4) -- (v2); 
    \draw[thick] (v3) -- (v4);  
    
\end{tikzpicture}

\caption{The dependency graph $\Gamma$ corresponding to formula $\Phi$ defined in (\ref{eq:Phi}).}
\label{fig:depG}
\end{figure}

There is a remarkably simple (though not simple to derive) formula for the expected number of iterations in a run of algorithm PRS, which we now present.  Kolipaka and Szegedy~\cite{KS11} derived this formula as an upper bound, but it is in fact exact.  Given~$\Phi$, define $\Gamma=\Gamma(\Phi)$ to be the \emph{dependency graph} with vertex set $[m]$ (where vertex~$k$ corresponds to clause $\phi_k$) and edge relation~$\sim$ defined by $k\sim\ell$ iff $\scp(\phi_k)\cap \scp(\phi_\ell)\not=\emptyset$.  (Refer to Figure~\ref{fig:depG} for an example.)  Let $\Sigma=\{s_1,s_2,\ldots,s_m\}$ be an alphabet of $m$ symbols.  If $k\sim\ell$ then symbols $s_k$ and $s_\ell$ do not commute;  otherwise, $s_k$ and $s_\ell$ do commute, i.e., $s_ks_\ell=s_\ell s_k$.  Denote by $\mathcal{R}_\Gamma$ the set of commutation relations:
$$
\mathcal{R}_\Gamma=\{s_ks_\ell=s_\ell s_k: k\not\sim\ell\}.
$$
The set $\Sigma^*/\mathcal{R}_\Gamma$ of \emph{traces} over $\Sigma$ is the set of all words over the alphabet~$\Sigma$ quotiented by the commutation relations~$\mathcal{R}_\Gamma$.  So a trace can be thought of as word over $\Sigma$ where we regard two words as indistinguishable if one can be obtained from the other by transposing adjacent commuting symbols.  

There is an elegant expression for the generating function for traces.  Introduce indeterminates $z_1,\ldots,z_m$ corresponding to the $m$~clauses in~$\Phi$, and define 
$$P_{\Gamma}(z_1,\ldots,z_m)=\sum_{I\in\calI(\Gamma)}(-1)^{|I|}z_I,$$
where $z_I=\prod_{k\in I}z_k$ and $\calI(\Gamma)$ is the set of all independent sets in~$\Gamma$.  Note that the polynomial $P_{\Gamma}$ is the generating function of independent sets in $\Gamma$, with terms signed according to parity.  The generating function for traces $\Sigma^*/\mathcal R$ is the multivariate polynomial $T_\Gamma(z_1,\ldots,z_m)$ in which the coefficient of $z_1^{e_1}z_2^{e_2}\cdots z_m^{e_m}$ is the number of traces in which symbol $s_1$ occurs $e_1$ times, $s_2$ occurs $e_2$ times, etc. The following expression for the trace generating function is due to Cartier and Foata~\cite{CartierFoata}.  The derivation can also be found, e.g., in Knuth \cite[Thm~F]{KnuthVol4Fasc6} and Viennot \cite[Prop.~5.1]{Viennot}. 

\begin{lemma}
With $\Gamma$, $P_\Gamma$ as above, the generating function $T_\Gamma$ for traces $\Sigma^*/\mathcal{R}$ is given by $T_\Gamma(z_1,\ldots,z_m)=P_{\Gamma}(z_1,\ldots,z_m)^{-1}$.  
\end{lemma}

Take, as an example, the dependency graph $\Gamma$ from Figure~\ref{fig:depG}.  The generating function for signed independent sets in~$\Gamma$ is
$$
P_\Gamma(z_,z_2,z_3,z_4)=1 - z_1 - z_2 - z_3 - z_4 + z_1z_4,
$$
encoding the empty independent set $\emptyset$, the four singleton independent sets $\{1\}$, $\{2\}$, $\{3\}$, $\{4\}$, and the unique independent set $\{1,4\}$ of size two. Then,
\begin{align*}
P_\Gamma(z_,z_2,z_3,z_4)^{-1}&=(1-(1-P_\Gamma))^{-1}\\
&=1+(1-P_\Gamma)+(1-P_\Gamma)^2+(1-P_\Gamma)^3+\cdots\\
&=1+z_1+z_2+z_3+z_4+z_1^2+z_2^2+z_3^2+z_4^2\\
&\qquad{}+2z_1z_2+2z_1z_3+z_1z_4+2z_2z_3+2z_2z_4+2z_3z_4\\
&\qquad{}+\text{terms of degree $3$ and higher},
\end{align*}
(Note that $1-P_\Gamma$ has no constant term, so the expansion makes sense.)  Observe that the coefficient of $z_2z_3$ is 2, reflecting the fact that $s_2s_3$ and $s_3s_2$ are distinct traces, while the coefficient of $z_1z_4$ is 1, as $s_1s_4$ and $s_4s_1$ are equivalent as traces.   

The motivation for introducing traces is that they are in perfect correspondence with transcripts, where the $m$ symbols correspond to the $m$ possible kinds of resampling blocks;  specifically, symbol~$s_k$ corresponds to a block arising from resampling~$\scp(\phi_k)$.  Let $w=s_{i_1}s_{i_2}\ldots s_{i_t}$ be any word in $\Sigma^*$.  Consider the transcript that results if the algorithm PRS performs block resamplings in the order $\scp(\phi_{i_1}),\scp(\phi_{i_2}),\ldots,\scp(\phi_{i_t})$.  Now let $w'=s_{i_1'}s_{i_2'}\ldots s_{i_t'}$ be any word in $\Sigma^*$ that is equivalent to~$w$ under the commutation relations~$\mathcal R$.  It is not difficult to see that the same transcript results from the sequence of block resamplings $\scp(\phi_{i_1'}),\scp(\phi_{i_2'}),\ldots,\scp(\phi_{i_t'})$.  (Transposing the order of two adjacent commuting symbols transposes the order in which two blocks are resampled;  however, those blocks have no variables in common, so there is no change in the transcript.)  Conversely, if words $w$ and $w'$ lead to the same transcript then they must be equivalent under commutativity.  (Suppose $s_{i_1}\not=s_{i_1'}$.  Let $s_{i_h'}$ be the first occurrence of the symbol $s_{i_1}$ in $w'$.  The first $h-1$ resamplings prompted by $w'$ did not disturb the variables in $\scp(\Phi_{i_1})$.  Therefore, $s_{i_h'}$ commutes with all earlier symbols in~$w'$ and can be `bubbled' into first place.  The remaining symbols can be brought into alignment inductively.)  Transcripts are exactly the \emph{empilements [des pi\`eces]} or `heaps of pieces' of Viennot~\cite{Viennot}, who gives a beautiful pictorial explanation of the correspondence between \emph{empilements} (and hence transcripts) and traces.  See also Knuth~\cite[\S7.2.2.2]{KnuthVol4Fasc6}. 

The correspondence between traces and transcripts can be appreciated pictorially in Figure~\ref{fig:transcript}.  The depiction of the transcript is based on Viennot's \emph{empilements}.   Knuth invites us to think of each symbol as a piece in Tetris that appears from above and descends until further progress is obstructed.   The word $w=s_4s_2s_1s_3s_1s_4s_2$ specifies an order for the arriving pieces that leads to the transcript on the right of the figure.  The word $w'=s_4s_2s_1s_3s_4s_1s_2$ leads to the same transcript, since $s_1$ and $s_4$ commute.  In contrast, the word $s_4s_2s_3s_1s_1s_4s_2$ results in a different transcript, as $s_1$ and $s_3$ do not commute: the pieces corresponding to symbols $s_1$ and $s_3$ cannot pass each other.  The equivalence class $\{w,w'\}$ is a trace, since the only adjacent commuting pair of symbols is $s_1s_4$.  Traces, \emph{empilements} and transcripts are different views of the same concept.

Before analysing the runtime of algorithm PRS, let us observe that it is remarkably easy to compute the probability of observing a particular transcript such as the one in Figure~\ref{fig:transcript}.  Recall that each value in the resampling table is the result of an independent toss of a fair coin.  At time $t=0$, we have that $\omega_{4,0}=0$ and $\omega_{5,0}=1$, an event that occurs with probability~$\frac14$.  (The only way for $\phi_4(X_4,X_5)$ to be false is for $X_4$ to be 0, and $X_5$ to be~1.)  At time $t=1$ we know that $\omega_{1,0}=1$, $\omega_{3,0}=0$ and $\omega_{4,1}=1$, an event with probability $\frac18$, and so on for times $t=2,3,4,5,6$.  All these events are independent, and the probability that they all occur is $2^{-17}$.  Finally, the frontier must contain a satisfying assignment;  there are 10 satisfying assignments out of a total of 32, so the probability of observing the transcript depicted is $10\times2^{-22}$.

For $k\in[m]$, let $p_k=\Pr_\calD(\neg\phi_k)$ denote the probability that $\phi_k$ is false in the product distribution, and extend this notation to a set of clauses $S\subseteq[m]$ by letting $p_S=\prod_{k\in S}p_k$.  Then define
$$
q_S=\sum_{I\in\calI(\Gamma):I\supseteq S}(-1)^{|I\setminus S|}p_I.
$$
Note that $q_S=0$ if $S\notin\calI(\Gamma)$.  Note also that
\begin{align}
P_{\Gamma}(p_1,\ldots,p_m)&=\sum_{I\in\calI(\Gamma)}(-1)^{|I|}p_I=q_\emptyset\label{eq:q0}\\
\noalign{\noindent and}
p_kP_{\Gamma-N[k]}(p_1,\ldots,p_m)&=p_k\sum_{I\in\calI(\Gamma-N[k])}(-1)^{|I|}p_{I}\notag\\
&=\sum_{I\in\calI(\Gamma):I\ni k}(-1)^{|I|-1}p_{I}\notag\\
&=q_{\{k\}},\label{eq:qk}
\end{align}
where $\Gamma-N[k]$ denotes the graph obtained from the dependency graph~$\Gamma$ by removing the closed neighbourhood of~$k$ and incident edges.  (The \emph{closed neighbourhood} $N[k]$ of $k$ is the set containing vertex~$k$ and all its neighbours.)  

In the case of extremal instances, the quantity $q_\emptyset$ has a simple probabilistic interptetation. By the principle of inclusion-exclusion, 
\begin{align}
\Pr_\calD(\Phi)&=\Pr_\calD\bigg(\bigwedge_{k\in[m]}\phi_k\bigg)\notag\\
&=\sum_{S\subseteq[m]}(-1)^{|S|}\Pr_\calD\bigg(\bigwedge_{k\in S}\neg\phi_k\bigg)\notag\displaybreak[0]\\
&=\sum_{I\in\calI(\Gamma)}(-1)^{|I|}\prod_{k\in I}\Pr_\calD(\neg\phi_k)\label{eq:note}&\text{(see below)}\displaybreak[0]\\
&=\sum_{I\in\calI(\Gamma)}(-1)^{|I|}p_I\notag\\
&=q_\emptyset.\label{eq:q0interpretation}
\end{align} 
Equality~\eqref{eq:note} uses two facts: (a)~when $S$ is not an independent set the corresponding term is zero, by extremality, and (b)~for any independent set $I$, the events $\{\neg\phi_k:k\in I\}$ are probabilistically independent. Note, in particular, that $q_\emptyset>0$ when $\Phi$ is satisfiable.  

In the next theorem, the first sampling of the variables in some scope is regarded as a \emph{re}sampling, even though there was no previous one.

\begin{theorem}\label{thm:kresamplings} Suppose $\Phi$ is a satisfiable extremal instance.  Then the expected number of resamplings of the scope of $\phi_k$ during a run of\, $\PRS(\Phi,\calD)$ is $q_{\{k\}}/q_\emptyset$.
\end{theorem}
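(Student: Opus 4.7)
The plan is to compute $\Ex[N_k]$, where $N_k$ is the random number of resamplings of $\scp(\phi_k)$, by leveraging the trace-transcript correspondence already exposed in the text. My first step is to derive a clean formula for the probability that the algorithm produces a specified transcript. As illustrated in the worked example just before the theorem, each resampling block labelled by $\phi_k$ forces the corresponding entries of the resampling table to take an assignment falsifying $\phi_k$; summing product-distribution weights over all such falsifying assignments contributes exactly $p_k$. The final frontier must instead carry a satisfying assignment of $\Phi$, and summing $\calD$-weights over all such assignments yields $\Pr_\calD(\Phi)=q_\emptyset$ by \eqref{eq:q0interpretation}. Grouping transcripts by their underlying trace ``skeleton'' with resampling-count vector $(e_1,\ldots,e_m)$, the total probability attaching to a given skeleton is therefore $q_\emptyset \prod_{k=1}^m p_k^{e_k}$.

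Next I would translate this into a generating-function identity. By the Viennot/Cartier--Foata correspondence reviewed above, skeletons are in bijection with traces in $\Sigma^*/\mathcal R_\Gamma$, so the number of skeletons with counts $(e_1,\ldots,e_m)$ equals the coefficient of $\prod_k z_k^{e_k}$ in $T_\Gamma(z_1,\ldots,z_m)$. Treating the $z_i$ as commuting indeterminates and evaluating at $z_i=p_i$, this gives
\[
\Ex[N_k] \;=\; q_\emptyset \sum_{(e_1,\ldots,e_m)} e_k\, \Bigl[\text{coeff of } \prod_j z_j^{e_j} \text{ in } T_\Gamma\Bigr]\prod_j p_j^{e_j} \;=\; q_\emptyset \cdot p_k\,\frac{\partial T_\Gamma}{\partial p_k}(p_1,\ldots,p_m).
\]
As a sanity check, the same manipulation without the $e_k$ factor gives $\sum_\tau \Pr(\tau) = q_\emptyset\cdot T_\Gamma(p_1,\ldots,p_m) = q_\emptyset/q_\emptyset = 1$, consistent with the almost-sure termination established in \cref{thm:PRScorrect}; and finiteness of $\Ex[N_k]$ (needed to justify interchanging sum and expectation) follows from the exponential-tail bound on the runtime proved there.

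The third step is pure differentiation. Writing $P_\Gamma T_\Gamma = 1$ and differentiating gives $\partial T_\Gamma/\partial p_k = -T_\Gamma^{\,2}\,\partial P_\Gamma/\partial p_k$, while a direct computation from the definition of $P_\Gamma$ yields
\[
p_k\,\frac{\partial P_\Gamma}{\partial p_k}(p_1,\ldots,p_m) \;=\; \sum_{I\in\calI(\Gamma):\,I\ni k} (-1)^{|I|} p_I \;=\; -q_{\{k\}},
\]
using $(-1)^{|I|}=-(-1)^{|I\setminus\{k\}|}$. Substituting gives $p_k\,\partial T_\Gamma/\partial p_k = q_{\{k\}}/q_\emptyset^{\,2}$, so $\Ex[N_k] = q_\emptyset\cdot q_{\{k\}}/q_\emptyset^{\,2} = q_{\{k\}}/q_\emptyset$, as claimed.

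The main obstacle I anticipate is rigorously justifying the per-skeleton probability $q_\emptyset\prod_k p_k^{e_k}$. One needs that (i)~the cells of the resampling table inside the blocks and the cells that wind up on the final frontier are all distinct cells, so that their samples are mutually independent and probabilities multiply, and (ii)~the identity of the ``frontier cells'' is determined by the skeleton together with the behind-the-frontier entries, independently of the frontier values themselves. Point~(ii) is exactly the observation used in the proof of \cref{thm:PRScorrect}, where the frontier values could be swapped for any other satisfying assignment without changing the transcript. Once this independence is in hand, the rest of the argument is straightforward generating-function manipulation.
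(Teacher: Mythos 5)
Your proposal is correct and follows essentially the same route as the paper: establish that a transcript with resampling counts $(e_1,\ldots,e_m)$ occurs with probability $q_\emptyset\prod_k p_k^{e_k}$, invoke the Cartier--Foata/Viennot correspondence so that $q_\emptyset T_\Gamma(p_1z_1,\ldots,p_mz_m)$ is the probability generating function, and extract the expectation by differentiation. Your bookkeeping via $P_\Gamma T_\Gamma=1$ and $p_k\,\partial P_\Gamma/\partial p_k=-q_{\{k\}}$ is just a cosmetic variant of the paper's use of $p_kP_{\Gamma-N[k]}(p_1,\ldots,p_m)=q_{\{k\}}$, and your closing remarks on independence of the block/frontier cells and on the confluence of transcripts match the justifications the paper gives.
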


\begin{proof}
As noted earlier, the generating function for transcripts is $P_{\Gamma}(z_1,\ldots,z_m)^{-1}$.  We claim that the generating function for transcripts weighted according to probability of occurrence is $q_\emptyset P_{\Gamma}(p_1z_1,\ldots,p_mz_m)^{-1}$.  In other words, the probability of observing a transcript with $e_k$~resamplings of scope $\scp(\phi_k)$, for $1\leq k\leq m$, is the coefficient of $z_1^{e_1}z_2^{e_2}\cdots z_m^{e_m}$ in $q_\emptyset P_{\Gamma}(p_1z_1,\ldots,p_mz_m)^{-1}$. 
To see this, fix a transcript with $e_k$~resamplings of scope $\scp(\phi_k)$, for $1\leq k\leq m$, and consider the probability that a random resampling table will generate that transcript.  The frontier must contain a satisfying assignment, which happens with probability~$q_\emptyset$, by \eqref{eq:q0interpretation}.  Each block corresponding to a clause~$\phi_k$ must contain an assignment making $\phi_k$ false, which happens with probability~$p_k$.  All these probabilities are independent, so the overall probability of observing the transcript is $q_\emptyset\prod_{1\leq k\leq m}p_k^{e_k}$.  The claim follows.  Note that we have used that the fact that if it is \emph{possible} for a certain transcript to arise from a given resampling table it \emph{will} do so.

Note that $1-P_\Gamma(p_1,\ldots,p_m)=1-q_\emptyset\in[0,1)$, and so the power series expansion 
$$
P_\Gamma(p_1z_1,\ldots,p_mz_m)^{-1}=\sum_{i=0}^\infty\big(1-P_\Gamma(p_1z_1,\ldots,p_mz_m)\big)^i
$$ 
converges in an open neighbourhood of the point $z_1=\cdots=z_m=1$.  The expected number of resamplings of the scope of $\phi_k$ is given by 
\begin{align*}
&q_\emptyset\frac\partial{\partial z_k}P_{\Gamma}(p_1z_1,\ldots,p_mz_m)^{-1}\bigg|_{z_1=\cdots=z_m=1}\\
&\qquad\null=-q_\emptyset P_{\Gamma}(p_1z_1,\ldots,p_mz_m)^{-2}\frac\partial{\partial z_k}P_{\Gamma}(p_1z_1,\ldots,p_mz_m)\bigg|_{z_1=\cdots=z_m=1}\\
&\qquad\null=q_\emptyset P_{\Gamma}(p_1,\ldots,p_m)^{-2}p_kP_{\Gamma-N[k]}(p_1,\ldots,p_m).
\end{align*}
We use here the fact that $P_\Gamma$ is multilinear, so differentiating with respect to~$x_k$ eliminates terms corresponding to independent sets that do not include~$k$.  Using identities \eqref{eq:q0} and~\eqref{eq:qk}, we see that the expected number of times $\scp(\phi_k)$ is resampled is $q_{\{k\}}/q_\emptyset$.  
\end{proof}

We can recast the above theorem in a simple, easy to use form.

\begin{corollary}\label{cor:iterbd}
The expected number of iterations of Algorithm PRS on input $(\Phi,\calD)$ is 
$$
\Ex(\textup{\#iterations})=\frac{\Pr_\calD(\textup{Exactly one clause in $\Phi$ is false})}{\Pr_\calD(\textup{$\Phi$ is true})}.
$$
\end{corollary}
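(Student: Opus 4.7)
The plan is to combine Theorem~\ref{thm:kresamplings} with probabilistic interpretations of the $q_S$ quantities (in the spirit of \eqref{eq:q0interpretation}), and then exploit the extremal hypothesis to collapse the resulting sum into a single clean event. (I read the numerator on the right-hand side as "exactly one clause of $\Phi$ is \emph{violated}", which is what the identity below actually delivers.) By linearity of expectation and Theorem~\ref{thm:kresamplings}, the total expected number of iterations equals $\sum_{k=1}^m q_{\{k\}}/q_\emptyset$. The denominator is $q_\emptyset=\Pr_\calD(\Phi)$ by \eqref{eq:q0interpretation}, so the task reduces to showing that $\sum_k q_{\{k\}}=\Pr_\calD(\text{exactly one clause is violated})$.

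Next I would extract a probabilistic formula for $q_{\{k\}}$ parallel to \eqref{eq:q0interpretation}. By \eqref{eq:qk}, $q_{\{k\}}=p_k\,P_{\Gamma-N[k]}(p_1,\ldots,p_m)$. The subinstance induced by the clauses $\{\phi_\ell:\ell\notin N[k]\}$ is itself extremal, since extremality is inherited by induced subcollections, so the same inclusion-exclusion chain that gave \eqref{eq:q0interpretation} produces $P_{\Gamma-N[k]}(p_1,\ldots,p_m)=\Pr_\calD\bigl(\bigwedge_{\ell\notin N[k]}\phi_\ell\bigr)$. Because $\neg\phi_k$ depends only on $X_{\scp(\phi_k)}$ while the clauses indexed by $\ell\notin N[k]$ depend on variables disjoint from $\scp(\phi_k)$, the two events are independent under the product distribution $\calD$, and multiplying yields
$$q_{\{k\}}=\Pr_\calD\Bigl(\neg\phi_k\wedge\bigwedge_{\ell\notin N[k]}\phi_\ell\Bigr).$$

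The decisive step is to invoke the extremal property. For every neighbour $\ell\in N[k]\setminus\{k\}$, the disjunction $\phi_k\vee\phi_\ell$ is a tautology, so the event $\neg\phi_k$ already forces $\phi_\ell$ to hold; the clauses in $N[k]\setminus\{k\}$ can therefore be appended to the conjunction at no probabilistic cost, giving
$$q_{\{k\}}=\Pr_\calD\Bigl(\neg\phi_k\wedge\bigwedge_{\ell\ne k}\phi_\ell\Bigr),$$
i.e., the probability that $\phi_k$ is the uniquely violated clause. These $m$ events are pairwise disjoint as $k$ ranges over $[m]$, so summing yields $\sum_k q_{\{k\}}=\Pr_\calD(\text{exactly one clause of }\Phi\text{ is violated})$; dividing by $q_\emptyset=\Pr_\calD(\Phi)$ finishes the proof. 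The only real obstacle is the last identification: without extremality, $q_{\{k\}}$ only encodes the event that $\phi_k$ is violated together with the clauses \emph{outside} its closed neighbourhood being satisfied, which is strictly larger than the "uniquely violated" event and destroys the clean ratio.
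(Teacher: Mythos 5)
Your proposal is correct, and you were right to flag that the numerator in the stated corollary should read ``exactly one clause is \emph{false}'' --- the paper's own proof establishes exactly that version. The overall scheme coincides with the paper's: invoke Theorem~\ref{thm:kresamplings}, sum over $k$, and identify $q_{\{k\}}$ with the probability that $\phi_k$ is the uniquely violated clause. Where you diverge is in how that identification is obtained. The paper proves the more general identity $q_S=\Pr_\calD\bigl(\bigwedge_{k\in S}\neg\phi_k\wedge\bigwedge_{k\notin S}\phi_k\bigr)$ for every $S\in\calI(\Gamma)$ by a single inclusion--exclusion computation over supersets $S'\supseteq S$, mirroring the derivation of \eqref{eq:q0interpretation}. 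You instead specialise to $S=\{k\}$ and assemble the identity from three already-available ingredients: the factorisation $q_{\{k\}}=p_k\,P_{\Gamma-N[k]}(p_1,\ldots,p_m)$ from \eqref{eq:qk}, the observation that the subcollection $\{\phi_\ell:\ell\notin N[k]\}$ is again extremal with dependency graph $\Gamma-N[k]$ (so \eqref{eq:q0interpretation} applies to it), independence under the product distribution of events on disjoint scopes, and the tautology $\phi_k\vee\phi_\ell$ for $\ell\sim k$ to absorb the neighbouring clauses for free. Your route has the merit of making visible exactly where extremality enters (twice: once for the subinstance interpretation, once to append the neighbours), and of reusing \eqref{eq:qk} rather than redoing inclusion--exclusion; the paper's route is shorter and yields the formula for general $S$, which it does not actually need beyond singletons. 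Both are complete proofs.
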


\begin{proof}
Generalising the inclusion-exclusion argument used earlier, and assuming $S\in\calI(\Gamma)$, we have
\begin{align*}
\Pr_\calD\bigg(\bigwedge_{k\in S}\neg\phi_k\wedge\bigwedge_{k\in[m]\setminus S}\phi_k\bigg)&=\sum_{S'\supseteq S}(-1)^{|S'\setminus S|}\Pr_\calD\bigg(\bigwedge_{k\in S'}\neg\phi_k\bigg)\\
&=\sum_{I\in\calI(\Gamma):I\supseteq S}(-1)^{|I\setminus S|}\prod_{k\in I}\Pr_\calD(\neg\phi_k)\\
&=\sum_{I\in\calI(\Gamma):I\supseteq S}(-1)^{|I\setminus S|}p_I=q_S.
\end{align*} 
When $S\notin\calI(\Gamma)$, the above equality continues to hold, as both side are zero.  In particular, the probability that clause $\phi_k$ is false, and all others true, is precisely $q_{\{k\}}$. The result now follows from Theorem~\ref{thm:kresamplings}.
\end{proof}

{Thanks to} Lemma~\ref{lem:confluence}, the above results are completely robust against changes in the implementation of algorithm PRS.  Thus, the next scope to be resampled can be selected by arbitrary means:  the choice can be made on the current values of variables, the past execution of the algorithm, or even externalities such as random bits or the system clock.  It is also valid to resample several blocks simultaneously, in case several clauses are violated.  
If one is interested in the expected number of individual variables resampled, this can also be accessed though
\begin{equation}\label{eq:ExNumVars}
\Ex(\text{\#variables resampled})=\sum_{k=1}^m\frac{q_{\{k\}}a_k}{q_\emptyset},
\end{equation}
where $a_k$ is the arity of $\phi_k$, for $1\leq k\leq m$.


\section{Example applications}\label{sec:apps}

One application, to sink-free orientations, will be done in detail to illustrate the methods, and the other applications merely sketched.  For ease of presentation, all examples will be unweighted, i.e, the probability distributions~$\calD_i$ are all uniform, as is the output distribution.  Incorporating weights does not require any conceptual changes.   

\subsection{Sink-free orientations of a graph}
This approach to sampling sink-free orientations of a graph was introduced by Cohn, Pemantle and Propp~\cite{CPP02}, and placed within the general framework of PRS by Guo, Jerrum and Liu~\cite[\S4.1]{GJL19}.

Suppose $G=(V,E)$ is a graph with vertex set $\{v_1,v_2,\ldots,v_m\}$ and edge set $E=\{e_1,e_2,\ldots,e_n\}$.\footnote{The roles of $n$ and $m$ are reversed relative to the usual convention in graph theory, but this is necessary to preserve consistency with the previous section.}  We wish to sample, uniformly at random, an orientation of the edges of~$G$ that has no sinks, where a \emph{sink} is a vertex $v_i$ at which all incident edges are oriented towards~$v_i$.  We assume that $G$ has at least one such sink-free orientation.  It is convenient to choose a reference orientation for the edges of~$G$ that is sink-free;  denote by $\Garrow$ the directed graph obtained from $G$ by giving the edges of~$G$ this reference orientation.   

\begin{figure}[t]
\centering
\begin{tikzpicture}[xscale=0.15, yscale=0.15, inner sep=2pt, >=stealth]

    \draw (0,10)  node[lab] (v1) {$v_1$}; 
    \draw (10,20) node[lab] (v2) {$v_2$};
    \draw (10,0)  node[lab] (v3) {$v_3$}; 
    \draw (20,10) node[lab] (v4) {$v_4$}; 
    
    \draw[->,thick] (v1) -- (v2);
    \draw[->,thick] (v1) -- (v3);
    \draw[->,thick] (v2) -- (v3);
    \draw[->,thick] (v4) -- (v2); 
    \draw[->,thick] (v3) -- (v4);  
    
    \draw (4,16)  node[text width = 6mm] () {$e_1$}; 
    \draw (4,4)   node[text width = 6mm] () {$e_2$};
    \draw (13,10) node[text width = 6mm] () {$e_3$};
    \draw (18,16) node[text width = 6mm] () {$e_4$}; 
    \draw (18,4)  node[text width = 6mm] () {$e_5$};

  \end{tikzpicture}

\caption{A sample graph $\Garrow$ incorporating a reference orientation}
\label{fig:sinkfreeex}
\end{figure}

To fit the pattern of PRS, we introduce Boolean variables $X_1,X_2,\ldots,X_n$ and associate variable~$X_i$ to edge~$e_i$, for $1\leq i\leq n$.  These variables will be used to encode orientations of the edges of~$G$.  The variable $X_i$ is to be interpreted as follows:  if $X_i=0$ then the edge $e_i$ is oriented against the reference orientation (of $e_i$ in $\Garrow$) and if $X_i=1$ then $e_i$ is oriented with the reference orientation.  Next, introduce clauses~$\{\phi_k\}$ to encode the event that vertex $v_k$ is not a sink.  So the scope of $\Phi_k$ is the set $\scp(\Phi_k)=\{i:\text{$e_i$ is incident at $v_k$}\}$, and the clause $\phi_k$ asserts that at least one edge incident at vertex~$v_k$ is oriented away from~$v_k$.  By way of example, consider the graph $G$ in Figure~\ref{fig:sinkfreeex}, which has been assigned a reference orientation to give a sink-free directed graph~$\Garrow$.  The condition that vertex~$v_2$, for example, is not a sink is asserted by the clause $\phi_2(X_1,X_3,X_4)=\neg X_1\vee X_3\vee\neg X_4$.  Then $\Phi$ is the formula
$$
\Phi(\bfX)=(X_1\vee X_2)\wedge(\neg X_1\vee X_3\vee\neg X_4)\wedge(\neg X_2\vee\neg X_3\vee X_5)\wedge(X_4\vee \neg X_5)
$$
that we encountered already in the previous section.  

We observed earlier that $\Phi$ is an extremal instance.  This is true in general for sink-free orientations.  If we have indices $1\leq k<\ell\leq m$ such that $\scp(\phi_k)\cap\scp(\phi_\ell)\not=\emptyset$ then necessarily vertices $v_k$ and~$v_\ell$ are adjacent.  But then it is impossible for $v_k$ and $v_\ell$ to both be sinks, and hence $\phi_k\vee\phi_\ell$ must hold.  So Theorem~\ref{thm:PRScorrect} immediately assures us that PRS will produce a uniform random sink-free orientation with probability~1.  But is the expected running time polynomial in $n$ and~$m$?  In order to apply Corollary~\ref{cor:iterbd} we need to bound the ratio $q_{\{k\}}/q_\emptyset$.  Although we don't have a handle on $q_{\{k\}}$ and $q_\emptyset$ --- and, in a sense, $q_\emptyset$ is a quantity we would like to compute --- we can bound the ratio by defining an appropriate mapping from orientations with exactly one sink to those with none.

Introduce a function $f$ from $\{1,\ldots,m\}$ to itself that is consistent with the reference orientation, that is to say, $(v_k,v_{f(k)})$ is a (directed) edge in $\Garrow$ for all $1\leq k\leq m$.  This is possible because the reference orientation is sink-free.  To each orientation of $G$ that has a single sink at $v_k$ we associate a sink-free orientation as follows.  Let $e_i$ be the edge $(v_k,v_{f(k)})$.  Reverse the orientation of~$e_i$, i.e., set $X_i$, which was previously 0, to~1.  Vertex~$v_k$ is no longer a sink, but $v_{f(k)}$ may have become one.  If $v_{f(k)}$ is not a sink than halt.  Otherwise reverse the orientation of the edge $(v_{f(k)},v_{f^2(k)})$, and continue.  This process must terminate.  For suppose not.  Let $t$ be the first instant at which we revisit a vertex, i.e., such that $f^t(k)=f^s(k)$ for some $0\leq s<t$.  The edge $(v_{f^s(k)},v_{f^{s+1}(k)})$ is directed away from $v_{f^s(k)}$, and hence vertex $v_{f^s(k)}$ is not a sink, a contradiction.  (It is important to note that we leave and revisit vertex~$v_s$ via different edges.) 

The edges that were flipped in the above construction form a path $v_k=v_{f^0(k)},\allowbreak v_{f^1(k)},\ldots,\allowbreak v_{f^\ell(k)}$.  We may undo the construction provided we know $f^0(k)=k$ and $f^\ell(k)$.  It follows that the number of orientations with a single sink exceeds the number of sink free orientations by a factor at most~$m^2$.  So by Corollary~\ref{cor:iterbd} the expected number of iterations in a run of PRS --- in this case the number of sinks that are `popped' --- is bounded above by $m^2=|V(G)|^2$.  We may also bound the number times the orientations of individual edges are flipped.  Fix a vertex $v_k$.  We saw above how to repair an orientation with a single sink at $v_k$.  To undo this repair, we just need to specify the index $f^\ell(k)$.  Thus the number of orientations with a single sink at~$v_k$ exceeds the number of sink-free orientations by a factor~$m$.    Referring to \eqref{eq:ExNumVars}, we have $q_{\{k\}}/q_\emptyset\leq m$ and $a_k=\deg(v_k)$, the degree of vertex~$v_k$. Thus the expected number of edge orientation reversals is $\sum_{k=1}^m q_{\{k\}}a_k/q_\emptyset\leq \sum_{k=1}^m m\deg(v_k)\leq2mn$.  So the expected number of orientation reversals is a most $2|V(G)|\,|E(G)|$.  All this is in agreement with~\cite{CPP02}.  

Surprisingly, the upper bound on edge-reversals can be tightened further to $|E(G)|+|V(G)|^2$:  see Guo and He~\cite{GH18}.  Note that, the runtime analysis critically used the assumption that coin tosses are unbiased, so that either orientation of an edge is equally likely.  (A simple counterexample shows that this assumption is necessary.) In contrast, correctness of the algorithm extends to asymmetric orientation probabilities.

\subsection{Spanning trees of a graph}

The Cycle-popping algorithm is an approach to uniformly sampling spanning trees in a graph, introduced by Propp and Wilson~\cite{PW98}.  Suppose $G$ is a graph with vertex set $V=\{v_0,v_1,\ldots,v_n\}$ and edge set~$E$.  Instead of sampling spanning trees in~$G$ we will instead sample spanning (in-)arborescences\footnote{That is, directed spanning trees with edges directed towards a root vertex.} rooted at~$v_0$, which is of course equivalent.  

For each $1\leq i\leq n$, define $D_i=\{j:\{v_i,v_j\}\in E\}$, and make $D_i$ into a probability space~$\calD_i$ by equipping it with the uniform distribution.  Introduce random variables $X_1,\ldots,X_n$ distributed as $\calD_1,\ldots,\calD_n$.  These variables indicate, for each $1\leq i\leq n$, a possible exit from vertex~$v_i$.  For each simple (oriented) cycle $C=(v_{i_0},v_{i_1},\ldots,v_{i_{\ell-1}},v_{i_{\ell}}=v_{i_0})$ define the predicate $\phi_C$ by 
$$
\phi_C=\neg(X_{i_0}=i_1\wedge X_{i_1}=i_2\wedge\cdots\wedge X_{i_{\ell-1}}=i_\ell),
$$
and the formula $\Phi$ by $\Phi=\bigwedge_C\phi_C$, where the conjunction is over all oriented cycles in~$G$.  (In this context, `simple' is taken to mean `containing no repeated vertices';  thus we regard the 2-cycle $(v_{i_0},v_{i_1},v_{i_0})$ as simple.)  The intended interpretation of the event $X_i=j$ is that vertex $v_j$ is the ancestor of vertex $v_i$ in the arborescence.  The formula $\Phi(\bfX)$ asserts that the ancestor relation is consistent (has no cycles) and hence that $\bfX$ encodes a spanning arborescence rooted at~$v_0$.  

Consider two clauses $\phi_C$ and $\phi_{C'}$ corresponding to distinct cycles $C$ and~$C'$.  If $\scp(\phi_C)\cap\scp(\phi_{C'})\not=\emptyset$ then $C$ and~$C'$ must have a vertex in common.  Select a vertex~$v_i$ that is common to $C$ and~$C'$ with the additional property that the successor to~$v_i$ in cycle~$C$ is not equal to the successor to~$v_i$ in cycle~$C'$.  Let $v_j$ be the successor to~$v_i$ in~$C$ and $v_{j'}$ be the successor in~$C'$.  It is clear that $X_i=j$ and $X_i=j'$ cannot both be true, and hence $\phi_C$ and $\phi_{C'}$ cannot both be false.  Therefore $\Phi$ is extremal.  

As, in the previous example, we need to estimate the ratio between aborescences and `near-arborescences' that contain a single cycle.  (A near-arborescence has two components:  a spanning arborescence on some subset $S\ni v_0$ of the vertices~$V$ of~$G$, rooted at~$v_0$, and a unicyclic directed subgraph spanning $V\setminus S$.)  As before, by considering a suitable mapping from near-arborescences to arboresecences, it can be shown that the number of the former is at most $|V(G)|\,|E(G)|$ times the number of the latter.  Thus, by Corollary~\ref{cor:iterbd}, the number of iterations made by PRS is at most $|V(G)|\,|E(G)|$.  A more refined analysis, due to Guo and He~\cite[Thm~15]{GH18}, shows that the total number of \emph{variable} updates is bounded by essentially the same expression.  

\subsection{Root-connected subgraphs}

This `cluster-popping' algorithm was proposed by Gorodezky and Pak~\cite{GP14}, who conjectured it to be efficient on a certain class of directed graphs.  The conjecture was resolved affirmatively by Guo and Jerrum~\cite{GJ19}.

Suppose $G=(V,A)$ is a directed graph with a distinguished root vertex $r\in V$.  A~spanning subgraph $(V,S)$ of~$G$ is said to be \emph{root-connected} if, for every vertex $v\in V$, there is a directed path in $(V,S)$ from~$v$ to~$r$.  Our task is to sample, uniformly at random, a root-connected subgraph of $G$.  As usual, we restrict our attention to the unweighted version.  However, as we shall note later, the weighted version is of interest, owing to its connection to a network reliability problem. 

A subgraph $(V,S)$ may be encoded by variables $\bfX=(X_e:e\in A)$ taking values in $\{0,1\}$.  The interpretation of $X_e=1$ is that $e\in S$.  For an arc $e\in A$, denote by $e^-$ and $e^+$ the start and end vertex of~$e$.  A \emph{cluster} in $(V,S)$ is a set $\emptyset\subset C\subseteq V\setminus\{r\}$ of vertices with the property that no edge $e\in S$ exists with $e^-\in C$ and $e^+\in V\setminus C$.   The property `$C$ is a cluster' can be expressed formally by the predicate $\psi_C=\bigwedge_{e\in\mathrm{cut}(C)}\neg X_e$, where $\mathrm{cut}(C)=\{e\in A:e^-\in C\text{ and }e^+\in V\setminus C\}$.  If the subgraph $(V,S)$ has a cluster~$C$ then it is clear that no vertex in~$C$ can reach $r$, via a directed path in $(V,S)$, and hence $(V,S)$ is not root-connected.  The converse is also true:  Suppose $(V,S)$ is not root-connected, and let $v$ be some vertex from which the root~$r$ is not reachable.  Let $C$ be the set of all vertices reachable from $v$.  Then $C$ is a cluster in $(V,S)$.  

This observation suggests that we should define 
\begin{equation}\label{eq:PhiCluster}
\Phi=\bigwedge_{\emptyset\subset C\subseteq V\setminus\{r\}}\phi_C,
\end{equation}
where $\phi_C=\neg\psi_C$.  The formula $\Phi$ denies the existence of a cluster in the subgraph encoded by~$\bfX$, and hence correctly expresses the property of being root-connected.  The catch is that $\Phi$ is not in general extremal.  It is perfectly conceivable that two clusters $C,C'$ exist that have nonempty intersection $C\cap C'\not=\emptyset$.  In that case, we might have $\scp(\phi_C)\cap\scp(\phi_{C'})\not=\emptyset$ and yet $\phi_C$ and $\phi_{C'}$ are both false.  The solution is to make the predicates $\phi_C$ less demanding, while preserving the semantics of~$\Phi$.  We say that the cluster $C$ is \emph{minimal} if it contains no cluster $C'$ with $C'\subset C$.  Then we define $\phi_C$ to be true if $C$ is not a minimal cluster.  Formally, 
$$
\phi_C=\neg\Big[\psi_C\wedge \bigwedge_{\emptyset\subset C'\subset C}\neg\psi_{C'}\Big]=\neg\psi_C\vee\bigvee_{\emptyset\subset C'\subset C}\psi_{C'}.
$$
Then define $\Phi$ as in \eqref{eq:PhiCluster}.  We claim that $\Phi$ still expresses the condition that $\bfX$ encodes a root-connected subgraph $(V,S)$.  If $(V,S)$ is root-connected, then no cluster exists and hence $\phi_C$ is satisfied for all $\emptyset\subset C\subseteq V\setminus\{r\}$.  Conversely, suppose that $(V,S)$ is not root-connected.  Then there is at least one cluster, and hence at least one minimal cluster~$C$.  For this cluster, $\phi_C$ is contradicted, and hence $\Phi$ is false.     

Although the meaning of $\Phi$ is unchanged, the formula is now extremal.  First note that, for all subsets~$C$, 
$$
\scp(\phi_C)=\{X_e:e^-\in C\}.
$$    
So if $\phi_C$ and $\phi_{C'}$ are any two distinct clauses with $\scp(\phi_C)\cap\scp(\phi_{C'})\not=\emptyset$, we must have $C\cap C'\not=\emptyset$.  If $C$ and $C'$ are both clusters then $C\cap C'$ must also be a cluster.  Therefore, $C$ and~$C'$ cannot both be minimal clusters.  It follows that at least one of $\phi_C$ or $\phi_{C'}$ must hold.  This deals with correctness of PRS in this context.  

Unfortunately, PRS does not have expected polynomial runtime on general instances~$G$, as can be appreciated by considering a counterexample presented by Gorodezky and Pak~\cite{GP14}.  However, those same authors conjectured that the runtime is polynomial when the graph~$G$ is `bidirected', i.e., an edge exists from vertex $u$ to~$v$ in~$G$ if and only if an edge exists from $v$ to $u$.  This special case is of interest, since root-connected subgraphs of a bidirected graph $G$ correspond (via a constantly many-one relation) to spanning connected subgraphs of the undirected version of~$G$.  Thus, cluster popping provides a efficient approach to sampling connected spanning subgraphs of a graph.   

The conjecture of Gorodezky and Pak may be verified using Corollary~\ref{cor:iterbd}.  Again the argument involves a mapping from subgraphs with exactly one minimal cluster to root-connected subgraphs.  The combinatorial details of this mapping and its analysis, which are more involved in this case that the previous ones, are given by Guo and Jerrum~\cite{GJ19}.  The resulting upper bound on the expected number of variable resamplings is $|V(G)|\,|E(G)|^2$, which can be improved to $|V(G)|\,|E(G)|$ by a more refined analysis~\cite{GH18}.  For a short while, PRS provided the only known attack on sampling connected spanning subgraphs of a general undirected graph, and its weighted version, undirected all-terminal reliability.  However the same problem (in a more general setting) has since been solved by Markov chain simulation by Anari, Liu, Oveis Gharan and Vinzant~\cite{ALOV}.

\subsection{Bases of bicircular matroids}
Another application of PRS is to sampling bases of a bicircular matroid.  The algorithm was first presented in a slightly different guise by Kassel and Kenyon~\cite{KK17}.  Suppose $G=(V,E)$ is an undirected graph.  The \emph{bicircular matroid} associated with~$G$ has~$E$ as its ground set. The bases of the matroid are all spanning subgraphs of $G$ in which every connected component is unicyclic;  equivalently, every connected component has the same number of edges as it has vertices.  The sampling algorithm may be derived methodically using PRS{}.  The application has similarities with the cycle-popping algorithm described above in the context of sampling spanning trees.  

As with cycle popping, variables are introduced that encode a function~$g$ from~$V$ to itself that respects the edges of~$G$.  (This is a slight deviation from the spanning trees case, where the function was from $V\setminus\{r\}$ to~$V$.)  The spanning subgraph $(V,S)$ defined by $S=\{\{v,g(v)\}:v\in V\}$ is very like a basis of the bicircular matroid, with two caveats.  First, we want to rule out cycles of length~2 --- that is, situations in which $g(g(v))=v$ for some $v\in V$ --- as such functions~$g$ do not correspond to valid bases.  Second, each basis with $c$~connected components corresponds to $2^c$ distinct functions, as each cycle may be traced in either orientation.   
  
To deal with these two objections, we specify a preferred orientation for every cycle in~$G$.  Our formula $\Phi$ includes a clause $\phi_C$, for each potential cycle~$C$ that either (a)~has length two, or (b)~is oriented in the in the wrong sense.  In each case, $\phi_C$ asserts that $C$ does not occur.  It is easy to check that $\Phi$ is extremal.  The expected number of resamplings (either of clauses or individual variables) is $O(|V(G)|^2)$.  Details are given by Guo and Jerrum~\cite{GJ18c}.  

\subsection{Notes}
The examples listed above are not the only known applications of PRS, but they are the only non-trivial ones for which polynomial-time running time bounds are known.  At least, they are the one ones I am aware of.

One tempting potential application is to sampling strong orientations of an undirected graph.  An orientation of the edges of an undirected graph $G$ is \emph{strong} if there is a directed path from every vertex of $G$ to every other.  If $G$ is connected, strong orientations coincide with `totally cyclic orientations'.  The number of total cyclic orientations of a graph $G$ is an evaluation of the Tutte polynomial (at the point $(0,2)$).  It is known that counting totally cyclic (and hence strong orientations) is $\numP$-complete~\cite{JVW}.  However, the computational complexity of approximately counting or uniformly sampling totally cyclic orientations is unknown.  

The cluster-popping algorithm for root-connected subgraphs is easily adapted to strong orientations.  For a set $\emptyset\subset S\subset V=V(G)$ of vertices of~$G$, we say that $S$ is \emph{cluster} if all edges between $S$ and $V\setminus S$ are directed into~$S$.  (The crucial difference with the root-connected case is that there is no distinguished root vertex~$r$ that is excluded from all clusters.)  We say that a cluster is \emph{minimal} if it is minimal with respect to inclusion.  As usual, define $\Phi=\bigwedge_{\emptyset\subset S\subset V}\phi_S$, where the formula~$\phi_S$ expresses the condition that $S$ is not a minimal cluster.  It may be verified that $\Phi$ is extremal, and hence that PRS produces a uniform random strong orientation (assuming that the $G$ has one, which happens exactly when the graph $G$ is bridgeless).  Unfortunately, the expected runtime may be exponential, as can be appreciated by considering the ladder graph~$L_n$ on $2n$ vertices.  (The ladder graph can be viewed as a $n\times 2$ rectangular piece of the square lattice, or as the cartesian product $P_n\times P_2$ of a path on $n$~vertices and a path on $2$ vertices.)  If $v_k$ is one of the degree-2 corner vertices then the ratio $q_{\{k\}}/q_\emptyset$ from Theorem~\ref{thm:kresamplings} is exponential in~$n$.  (By induction on~$n$, the number of strong orientations is $2\times3^{n-2}$, whereas the number of orientations with a unique minimal cluster $\{v_k\}$ is at least $4^{n-1}$.)  The fact that Theorem~\ref{thm:kresamplings} gives an exact result and not just an upper bound comes in useful here, as it enables us to deduce a \emph{lower} bound on the running time of PRS.

\section{Non-extremal instances}\label{sec:nonextremal}

In an extremal instance, no two clauses that share variables can be simultaneously false.  We have seen that this leads to uniform outputs from PRS.  It transpires that we can get away with a little less than this.  
\begin{definition}\label{def:quasi-extremal}
We say that the formula $\Phi=\phi_1\wedge\cdots\wedge\phi_m$ is \emph{quasi-extremal} if the following holds, for all $k,\ell\in[m]$ and assignments $\bfX$ and $\bfX'$: if $\neg\phi_k(\bfX)\wedge\neg\phi_\ell(\bfX)$ and it is possible to get from~$\bfX$ to~$\bfX'$ by resampling variables in the scope of~$\phi_\ell$, then $\neg\phi_{k'}(\bfX')$ for some~$k'$ with $\scp(\phi_{k'})\supseteq\scp(\phi_k)$.  
\end{definition}
Note that an extremal instance satisfies the above definition with $k'=k$, so the qualifier `quasi-extremal' is a weakening of `extremal'.  The additional flexibility allows PRS to be applied to a significantly wider class of examples.

The algorithm is exactly as before except for one change.  Although we have considerable flexibility in the order in which to resample (the scopes of) clauses, we no longer have complete freedom.

\begin{algorithm}
\caption{Partial Rejection Sampling with limited nondeterminism}\label{alg:PRS-nonextremal}
\begin{algorithmic}
\STATE $\PRS(\Phi,\calD)$ 
\COMMENT {$\Phi$ is a formula on variable set $\bfX$}
\STATE {Sample $\bfX$ from the product distribution $\calD_1\times\cdots\times\calD_n$}
\WHILE {$\neg\Phi(\bfX)$}
\STATE {$N:=\{\ell:\neg\phi_\ell(X_{\scp(\phi_\ell)})\}$}
\STATE {Choose $k\in N$ deterministically, \emph{based only on the set $N$ itself}}
\STATE {Resample all variables in $\scp(\phi_k)$}
\ENDWHILE
\end{algorithmic}
\end{algorithm}

\begin{theorem}\label{thm:PRScorrect-nonextremal}
Suppose $\Phi$ is a quasi-extremal satisfiable instance.  Then $\PRS(\Phi,\calD)$ terminates with probability~1.  On termination, $\bfX$ is a realisation of a random variable from the distribution~$\calD_\Phi$.
\end{theorem}

\begin{proof}
Termination with probability 1 can be argued exactly as in the proof of Theorem~\ref{thm:PRScorrect}.

For correctness, we set up the resampling table as in the proof of Theorem~\ref{thm:PRScorrect}. 
As before, fix a resampling table $T$, and run $\PRS$ on $T$ to obtain a transcript.  Since the algorithm has terminated, we know that the frontier contains a satisfying assignment.  Create a new resampling table~$T'$ by replacing the values in the frontier by some other satisfying assignment.  Now run the algorithm on the new resampling table~$T'$.  We claim that this second run correctly outputs the planted satisfying assignment.

If, in both runs of the algorithm, the same clause~$\phi_k$ is selected in every iteration then the output indeed will be correct.  So assume that in some iteration different clauses are selected in the two runs.  For this to occur, the set~$N$ must differ between the two runs.  Consider the first iteration on which this occurs, and suppose~$\phi_k$ is true in one run and false in the other.  As before, let $S=\scp(\phi_k)$ and partition~$S$ as $S=S_I\cup S_F$, where variables~$S_I$ (respectively, $S_F$) take values from the interior (respectively, frontier) of the resampling table.  Note that $S_I\not=\emptyset$ (otherwise $\phi_k$ would be true in both runs) and $S_F\not=\emptyset$ (otherwise $\phi_k$ would have the same truth value in both runs).  

There are two cases, both of which lead to a contradiction.  Suppose first that $\phi_k$ is false in the $T$-run (and incidentally true in the $T'$-run, thought this is not relevant to the argument).  Allow the $T$-run to continue.  The algorithm does not resample $\scp(\phi_k)$ itself, since that action would take it past the final frontier of the table.  If it resamples $\scp(\phi_\ell)$ for some $\ell\not=k$ then, since $\Phi$ is quasi-extremal, this action would leave behind a clause $\phi_{k'}$ with $\phi_{k'}$ false and $\scp(\phi_{k'})\supseteq\scp(\phi_k)$.  Arguing as before, the algorithm does not resample $\scp(\phi_{k'})$ so, by induction, at least one clause will always be false for the remainder of the run.  Thus, we can never make all clauses of~$\Phi$ true, which contradicts the fact that the transcript is finite.

The second and final case has $\phi_k$ false in the $T'$-run (and incidentally true in the $T$-run).  Up to this point, the two runs have made exactly the same choices of scopes to resample.  Now imagine that all the resampling steps in the $T$-run are faithfully mirrored in the $T'$ run.  We have deviated from the deterministic choice rule of the algorithm, but all resampling steps are legal, in the sense that we always resample scopes $\scp(\phi_\ell)$ for which $\phi_\ell$ is currently false.  The reason for this is exactly as in the proof of Theorem~\ref{thm:PRScorrect}:  briefly, that we never resample variables in the frontier and the variables sampled from the interior have the same values in both runs.  The $T'$-run finishes with the same transcript as the $T$-run, but with a different satisfying assignment in the frontier. In particular all clauses of $\Phi$ are satisfied.  On the other hand, we may argue as follows.  Since $\scp(\phi_k)$ is never resampled again in the $T$-run, it is also never resampled in the $T'$-run.  Also, as in the first case, by resampling $\scp(\phi_\ell)$ for $\ell$ with $\ell\not=k$, we must leave behind a clause $\phi_{k'}$ with $\phi_{k'}$ false and $\scp(\phi_{k'})\supseteq\scp(\phi_k)$.  Arguing again by induction, we can never make all clauses of~$\Phi$ true, which is a contradiction.

Summarising, the final transcript remains unchanged if one satisfying assignment is substituted for another in the final frontier.  Thus, conditioned on the transcript, each satisfying assignment $\bfX=(b_1,\ldots,b_n)$ of~$\Phi$ occurs with  probability proportional to $\calD_1(b_1)\*\calD_2(b_2)\cdots\calD_n(b_n)$.   So, at termination, $\bfX$ is distributed as~$\calD_\Phi$. 
\end{proof}

\subsection{Example: Independent sets (the hard-core gas model)}

Suppose wish to sample independent sets in a graph~$G$.  Introduce variables $\bfX=(X_v:v\in V(G))$ taking values in $\{0,1\}$ to encode potential independent sets in~$G$.  The interpretation of $X_v=1$ (respectively $X_v=0$) is that vertex $v$ is in (respectively not in) the independent set.  In our product distribution we assume $\Pr(X_v=1)=\lambda/(1+\lambda)$ for all vertices of~$G$, for some positive `activity' $\lambda$.  (It is not essential that the activity is constant over vertices, but it slightly simplifies the exposition.)   We wish to sample from the conditional (Gibbs) distribution given that $\bfX$ encodes an independent set.

The natural formula expressing that $\bfX$ encodes an independent set is $$\Phi'(\bfX)=\bigwedge_{\{u,v\}\in E(G)}\phi_{\{u,v\}}$$ where $\phi_{\{u,v\}}=\neg(X_u\wedge X_v)$.  However $\Phi'$ is not extremal.  Following the example provided by cluster popping for root connected subgraphs, we try to re-express $\Phi'$ as a semantically equivalent extremal formula.

Let $S\subseteq V$ be a subset of at least two vertices that induces a connected subgraph $G[S]$ of~$G$.  Denote by 
$$\partial S=\{v:v\notin S\text{ and }\exists u\in S\text{ such that }\{u,v\}\in E(G)\}$$ 
the boundary of~$S$, containing all vertices outside of~$S$ that are adjacent to some vertex in~$S$.  We say that $S$ is a \emph{cluster} (relative to the assignment~$\bfX$) if $v\in S$ implies $X_v=1$ and $v\in\partial S$ implies $X_v=0$.  Refer to Figure~\ref{fig:clusters}, where solid (respectively, open) vertices $v$ are ones where $X_v=1$ (respectively, $X_v=0$).  For each $S$ of the above form, we introduce a clause~$\phi_S$ that asserts that $S$ is not a cluster.  Let $\Phi=\bigwedge_S \phi_S$ where $S$ ranges over all vertex subsets of size at least two that induce a connected subgraph.  It is clear that $\Phi(\bfX)$ asserts that $\bfX$ encodes an independent set.  

\begin{figure}[t]
\begin{tikzpicture}[scale=0.5]
    \foreach \i in {-5,...,5} {
        \foreach \j in {-3,...,4} {
            \draw (\i,\j) node [vertex] (\i X\j) {};
        }
    }
    \draw (-2,0) node [vertex, fill=black] () {};
    \draw (-2,1) node [vertex,fill=black] () {};
    \draw (-1,-1) node [vertex,fill=black] () {};
    \draw (-1,0) node [vertex,fill=black] () {};
    \draw (-1,1) node [vertex,fill=black] () {};
    \draw (1,0) node [vertex,fill=black] () {};
    \draw (1,1) node [vertex,fill=black] () {};
    \draw (1,2) node [vertex,fill=black] () {};
    \draw (2,1) node [vertex,fill=black] () {};
    \def\x{0.4}
    \draw[rounded corners=5pt] (-2-\x,2+\x) -- (-1+\x,2+\x) -- (-1+\x,1+\x) -- (0+\x,1+\x) -- (0+\x,-1-\x) -- (-1+\x,-1-\x) -- (-1+\x,-2-\x) -- (-1-\x,-2-\x) -- (-1-\x,-1-\x) -- (-2-\x,-1-\x) -- (-2-\x,0-\x) -- (-3-\x,0-\x) -- (-3-\x,1+\x) -- (-2-\x,1+\x) -- cycle;
    \draw[rounded corners=5pt] (1+\x,3+\x) -- (1+\x,2+\x) -- (2+\x,2+\x) -- (2+\x,1+\x) -- (3+\x,1+\x) -- (3+\x,1-\x) -- (2+\x,1-\x) -- (2+\x,0-\x) -- (1+\x,0-\x) -- (1
    +\x,-1-\x) -- (1-\x,-1-\x) -- (1-\x,0-\x) -- (0-\x,0-\x) -- (0-\x,2+\x) -- (1-\x,2+\x) -- (1-\x,3+\x)-- cycle;
    \foreach \i in {-5,...,5} {
        \foreach[evaluate={\jj=int(\j+1)}] \j in {-3,...,3} {
            \draw [thin] (\i X\j) -- (\i X\jj);
        }
    }
    \foreach[evaluate={\ii=int(\i+1)}] \i in {-5,...,4} {
        \foreach \j in {-3,...,4} {
            \draw [thin] (\i X\j) -- (\ii X\j);
        }
    }
\end{tikzpicture}
\caption{A pair of  clusters in  $\Zset^2$ with overlapping boundaries}
\label{fig:clusters}
\end{figure}

Unfortunately, a moment's reflection reveals that $\Phi$ is also not extremal.  Denote by~$\Sbar$ the set $\Sbar=S\cup\partial S$ and note that $\scp(\phi_S)=\Sbar$.  It is possible to have clusters $S$ and $S'$ with $\Sbar\cap\overline{S'}\not=\emptyset$, in which case $\scp(\phi_S)\cap\scp(\phi_{S'})\not=\emptyset$, and yet $\phi_S\vee\phi_{S'}$ is false.  For example, the path on five vertices $\{1,2,3,4,5\}$ with $X_3=0$ and $X_1=X_2=X_4=X_5=1$ has clusters $\{1,2\}$ and $\{4,5\}$, and $\partial\{1,2\}\cap\partial\{4,5\}=\{3\}\not=\emptyset$.  

However, it is straightforward to verify that $\Phi$ is quasi-extremal.   Suppose $\phi_S$ and $\phi_{S'}$ are simultaneously false.  It is easy to see that 
$$\Sbar\cap S'=(S\cup\partial S)\cap S'=S\cap S'=\emptyset,$$
and similarly that $S\cap\overline{S'}=\emptyset$.  If $\Sbar\cap \overline{S'}=\emptyset$ then $\scp(\phi_S)\cap\scp(\phi_{S'})=\emptyset$, and Definition~\ref{def:quasi-extremal} is satisfied with $k'=k$.  Otherwise, we are in the case 
$$\overline S\cap \overline{S'}=(S\cup\partial S)\cap(S'\cup\partial S')=\partial S\cap\partial S'\not=\emptyset.$$   
Resampling $\scp(\phi_{S'})$ can make $\phi_S$ true, but only at the expense of making some $\phi_{S''}$ with $S''\supset S$ false, since no variable in $S$ is resampled.  Again Definition~\ref{def:quasi-extremal} is satisfied, but now with $k'\not=k$.  In Figure~\ref{fig:clusters}, resampling the left cluster (with boundary) may increase the right cluster but cannot decrease it.

Specialising the generic PRS algorithm to this example, we obtain the following algorithm for sampling independent sets, which is a slight variant of one first described by Guo, Jerrum and Liu~\cite{GJL19}. 

\begin{algorithm}
\begin{algorithmic}
\STATE $\PRSforIS(G,\lambda)$ 
\COMMENT {$G$ is an undirected graph, and $\lambda$ a positive real number}
\STATE {Sample $\bfX$ from the product distribution $\mathrm{Bernoulli}(\lambda/(1+\lambda))^n$}
\WHILE {$\bfX$ does not encode an independent set}
\STATE {Choose a cluster $S$ using a valid rule}  
\STATE {Resample all variables $\{X_v:v\in\Sbar\}$}
\ENDWHILE
\end{algorithmic}
\caption{Partial Rejection Sampling for independent sets}\label{alg:PRSforIS}
\end{algorithm}

\begin{lemma}\label{lem:IScorrect}
$\PRSforIS(G,\lambda)$ terminates with probability~1.  On termination, $\bfX$ is a realisation of a random variable from the Gibbs distribution for independent sets in $G$ with activity~$\lambda$.
\end{lemma}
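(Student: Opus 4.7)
The plan is to deduce Lemma~\ref{lem:IScorrect} as an essentially immediate consequence of Theorem~\ref{thm:PRScorrect-nonextremal}, by checking that Algorithm~\ref{alg:PRSforIS} is really the specialisation of Algorithm~\ref{alg:PRS-nonextremal} to the formula $\Phi = \bigwedge_S \phi_S$ (with $S$ ranging over vertex subsets of size at least two inducing a connected subgraph), together with the dependency relation $S \sim S'$ iff $S \cap \overline{S'} \neq \emptyset$. The bulk of the work has already been carried out in the paragraph preceding the algorithm: namely, that $\Phi$ semantically expresses ``$\bfX$ encodes an independent set'', that Axiom~1 holds because if $S\cap\overline{S'}\neq\emptyset$ then some variable in $(S\cap\partial S')\cup(S'\cap\partial S)$ would need to take both values $0$ and~$1$ for $\phi_S$ and $\phi_{S'}$ to fail simultaneously, and that Axiom~2 holds because when $S\not\sim S'$, resampling $\overline{S'}$ does not touch any vertex of $S$, so either $\phi_S$ remains false or some $\phi_{\widehat S}$ with $\widehat S \supset S$ (so that $\scp(\phi_{\widehat S})\supseteq\scp(\phi_S)$) becomes false.

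Given these verifications, the first step is to identify PRSforIS with an instance of Algorithm~\ref{alg:PRS-nonextremal}. The initial sampling from $\mathrm{Bernoulli}(\lambda/(1+\lambda))^n$ is exactly drawing $\bfX$ from the product distribution $\calD_1\times\cdots\times\calD_n$. The phrase ``choose a cluster $S$ using a valid rule'' should be read as: let $N$ be the set of currently violated clauses (equivalently, the set of current minimal clusters), and select one member of $N$ by a rule that depends only on $N$; this matches the deterministic selection stipulation of Algorithm~\ref{alg:PRS-nonextremal}, which is needed for Theorem~\ref{thm:PRScorrect-nonextremal} to apply. Resampling all variables in $\scp(\phi_S) = \Sbar$ is exactly what PRSforIS does.

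With the identification in place, Theorem~\ref{thm:PRScorrect-nonextremal} immediately yields termination with probability~$1$ and output distributed as $\calD_\Phi$. It remains to observe that $\calD_\Phi$ is the Gibbs distribution with activity~$\lambda$. In the product distribution, any assignment $\bfX$ with $k$ ones has probability $(\lambda/(1+\lambda))^k(1/(1+\lambda))^{n-k} = \lambda^k/(1+\lambda)^n$; conditioning on $\Phi$ (i.e.\ on $\bfX$ encoding an independent set) makes each independent set $I$ appear with probability proportional to $\lambda^{|I|}$, which is exactly the Gibbs measure.

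I do not anticipate a genuine obstacle: the conceptual difficulty has already been absorbed into the choice of the enlarged clause family $\{\phi_S\}$ and the associated dependency relation, which were precisely designed to fit Axioms 1 and~2. The only thing one must be slightly careful about is ensuring that the ``valid rule'' clause in Algorithm~\ref{alg:PRSforIS} really does exclude adaptive, state-dependent tie-breaking and allows only $N$-dependent choice, since Theorem~\ref{thm:PRScorrect-nonextremal} does not permit the full nondeterminism of the extremal case.
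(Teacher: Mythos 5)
Your proposal is correct and follows exactly the paper's route: the paper's proof is the one-line ``Follows immediately from Theorem~\ref{thm:PRScorrect-nonextremal}'', with the verification of Axioms 1 and 2 and the identification of the clause family carried out in the surrounding text, just as you describe. Your additional observations (that the ``valid rule'' must be $N$-dependent, and that $\calD_\Phi$ is the Gibbs measure with activity $\lambda$) are accurate elaborations of details the paper leaves implicit.
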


\begin{proof}
Follows immediately from Theorem~\ref{thm:PRScorrect-nonextremal}.
\end{proof}

\subsection{Runtime Analysis}

Sampling independent sets is in general an NP-hard problem~\cite[Thm~4]{LubyVigoda}, so we need to make some assumption about the graph~$G$ and activity~$\lambda$.  Our goal is to find $\lambda_\Delta>0$ such that PRS terminates rapidly, for all $\lambda<\lambda_\Delta$ and all graphs~$G$ of maximum degree~$\Delta$.  

We take as our starting point the runtime analysis for extremal instances from Section~\ref{sec:extremal}.  One problem extending this analysis to the non-extremal situation is that the proof of Theorem~\ref{thm:kresamplings} fails.  The reason for this is that the interpretation of $q_\emptyset$ as the probability $\Pr_\calD(\Phi)$ that $\Phi$ is satisfied is no longer valid.  It transpires that this problem can be avoided by using a different line of proof.   Kolipaka and Szegedy~\cite[Thm 4]{KS11} show that the number of resamplings of the scope of~$\phi_k$ is bounded above by $q_{\{k\}}/q_\emptyset$, provided the point $(p_1,\ldots,p_m)$ lies within a certain region.  (Refer to the preamble to Theorem~\ref{thm:kresamplings} for notation.) This region was identified by Shearer~\cite{Shearer85} as the theoretical limit of validity of the Lov\'asz Local Lemma, even in the non-algorithmic setting.  Although elegant, it is difficult to use this result directly:  testing membership in the Shearer region in specific examples is challenging, as is computing $q_{\{k\}}$ and $q_\emptyset$, which no longer have simple combinatorial interpretations.  Fortunately, there are several weaker conditions that can be feasibly tested.  

Just as we weakened the definition of extremal to quasi-extremal, we can weaken the concept to dependency graph or relation to a lopsided dependency (`lopsidependency') graph~\cite[\S6]{MT10}.
\begin{definition}\label{def:lopsided}Given a satisfiable instance $\Phi=\phi_1\wedge\cdots\wedge\phi_m$, let $k,\ell\in[m]$ be arbitrary.  Suppose there is a resampling table relative to which it is possible to resample $\scp(\phi_k)$ and then immediately resample $\scp(\phi_\ell)$ but it is not possible to perform these operations in the reverse order (either because $\phi_\ell$ is true initially, or because $\phi_k$ is true after $\scp(\phi_\ell)$ has been resampled).  Then we write $k\sim \ell$ and say that $k$ and $\ell$ are \emph{lopsidedly dependent}. The graph $([m],\sim)$ is the \emph{lopsided dependency graph} of\/ $\Phi$.
\end{definition}
Note also that the lopsided dependency graph is a subgraph, in general strict, of the usual dependency graph.  In the independent set example it is easy to characterise the lopsided dependency graph: specifically, $S\not\sim S'$ iff $S\cap \overline{S'}=\emptyset$ and $\Sbar\cap S'=\emptyset$.  To see this, consider two clauses $\phi_S$ and $\phi_{S'}$ with $S\not\sim S'$.  Suppose that the resampling block for $\scp(\phi_{S'})$ lies immediately above that for $\scp(\phi_{S})$ in the resampling table.  We claim that the order of the resamplings can be reversed (leading potentially to a locally different transcript).  The case $\scp(\phi_S)\cap\scp(\phi_{S'})=\Sbar\cap\overline{S'}=\emptyset$ is uninteresting.  So consider a variable $X_i$ with $i\in\Sbar\cap\overline{S'}$.  Necessarily, $i\in\partial S\cap\partial S'$.  It follows that $X_i$ takes the value~0 before $\scp(\phi_S)$ is resampled (since $\neg\phi_S$) and retains that value after (since $\neg\phi_{S'}$).  Therefore the two scopes could as well have been resampled in the opposite order.   In Figure~\ref{fig:clusters}, the two clusters are related in the dependency graph but not in the lopsided dependency graph: only the boundaries intersect.

Definition~\ref{def:lopsided} is sometimes portrayed as as a positive dependency condition, but in the resampling table view of the world it seems more natural to interpret it as a commutativity condition.  We say that a clause is \emph{atomic} if it is falsified by exactly one assignment.  Definition~\ref{def:lopsided} takes a simpler form when all clauses are atomic.
\begin{observation}
Suppose $\phi_k$ is atomic, for every $k\in[m]$.  Then $k\sim \ell$, i.e., $k$ and $\ell$ are lopsidedly dependent, iff $\phi_k\vee\phi_\ell$ is a tautology.
\end{observation}
Moser and Tardos~\cite[Thm 6.1]{MT10} prove the following runtime bound.

\begin{theorem}\label{thm:MTbd}
Suppose that $\Phi$ is an satisfiable quasi-extremal instance with lopsided dependency graph $([m],\sim)$.  Suppose also that there exists a sequence of reals $(x_k\in(0,1):k\in[m])$ such that, for all $k\in[m]$,
$$
\Pr_\calD(\neg\phi_k)\leq x_k\prod_{\ell\in[m]:\ell\sim k}(1-x_\ell).
$$
Then, in expectation, Algorthm~\ref{alg:PRS-nonextremal} resamples $\scp(\phi_k)$ at most $x_k/(1-x_k)$ times before halting.
\end{theorem}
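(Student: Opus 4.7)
The plan is to adapt the classical Moser--Tardos witness-tree argument \cite{MT10} to the partial rejection sampling setting described here. Given a run of Algorithm~\ref{alg:PRS-nonextremal} on a resampling table, list the resampling steps in order as a log $L=(\phi_{k_1},\phi_{k_2},\ldots)$. For the $t$-th entry $\phi_{k_t}$ of this log, construct a witness tree $\tau_t$ as follows: the root is labelled $\phi_{k_t}$; then process $\phi_{k_{t-1}},\phi_{k_{t-2}},\ldots,\phi_{k_1}$ in reverse chronological order, and for each $\phi_{k_s}$ either (i) attach a new child labelled $\phi_{k_s}$ to the deepest vertex already in the tree whose label is equal or dependent (in the $\sim$ sense) to $\phi_{k_s}$, breaking ties by some fixed rule, or (ii) discard $\phi_{k_s}$ if no such vertex exists. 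The resulting witness tree is \emph{proper} in the sense that the labels of children of any fixed vertex form an independent set of clauses (under $\sim$) pairwise distinct.

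The first main step is to establish that the map $t \mapsto \tau_t$ from resampling steps of $\phi_k$ to proper witness trees rooted at $\phi_k$ is injective. I would argue this by showing that one can reconstruct the log from the witness tree by reading its vertices in any order consistent with ``parent after child'': Axiom~3 guarantees that the order in which two adjacent non-dependent resamplings occur can be swapped without changing the validity of the run, which is exactly the commutation property needed for the standard MT reconstruction lemma. Because Algorithm~\ref{alg:PRS-nonextremal} chooses the resampled clause deterministically from the set $N$, one also needs to check that two different logs with the same witness tree would have forced identical choices; this follows from the ``based only on $N$'' rule together with the tree uniquely determining the local dependency structure at each step. This is the step I expect to be the main obstacle, since Axiom~3 is weaker than full independence and the limited-nondeterminism rule has to be threaded carefully through the bookkeeping.

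The second step is the probabilistic bound. Given a proper witness tree $\tau$, say that $\tau$ \emph{occurs in the resampling table} if, reading the entries of the table for each label $\phi_v$ in the order prescribed by the tree and using the first ``unused'' row of each column, the relevant variables falsify $\phi_v$. Because distinct vertices of $\tau$ use disjoint entries of the resampling table (the re-used variables always consume fresh rows), and the entries are mutually independent, one obtains
\[
\Pr[\tau \text{ occurs}] \;\le\; \prod_{v\in\tau} p_{\mathrm{label}(v)}.
\]
Combined with injectivity, the expected number of resamplings of $\scp(\phi_k)$ is at most $\sum_{\tau} \prod_{v\in\tau} p_{\mathrm{label}(v)}$, summed over proper witness trees rooted at $\phi_k$.

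The final step is the standard MT generating-function calculation. Using the hypothesis $p_j \le x_j\prod_{\ell \sim j}(1-x_\ell)$, I would compare the sum above against a Galton--Watson branching process in which the root is $\phi_k$ and a vertex labelled $\phi_j$ independently produces, for each $\ell \sim j$, a child labelled $\phi_\ell$ with probability $x_\ell$ and no such child with probability $1-x_\ell$. A short manipulation (see \cite[Lem.~5.2]{MT10}) shows that the probability that this process produces a particular proper tree $\tau$ rooted at $\phi_k$ is at least $\frac{1-x_k}{x_k}\prod_{v\in\tau}\bigl(x_{\mathrm{label}(v)}\prod_{\ell\sim\mathrm{label}(v)}(1-x_\ell)\bigr) \ge \frac{1-x_k}{x_k}\prod_{v\in\tau} p_{\mathrm{label}(v)}$. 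Since the branching process probabilities sum to at most $1$, the expected number of resamplings of $\scp(\phi_k)$ is at most $x_k/(1-x_k)$, as required. The version using Axiom~$3'$ in place of Axiom~3 goes through verbatim, since atomicity of each $\phi_k$ makes the lopsided commutation trivial.
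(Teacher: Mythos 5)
First, note that the paper does not supply its own proof of Theorem~\ref{thm:MTbd}: it is quoted directly from Moser and Tardos~\cite[Thm~6.1]{MT10} (the lopsided version of their main result), so your proposal is really being measured against the Moser--Tardos argument. In outline you reproduce it faithfully: witness trees grown backwards from the log, injectivity of the map from resampling steps to trees, a product bound on the probability that a fixed tree occurs, and the Galton--Watson comparison. The injectivity step is in fact simpler than you make it: within a single run, $\tau_t$ contains one vertex labelled $\phi_k$ for every occurrence of $\phi_k$ in the log up to time~$t$, so distinct resampling times of $\scp(\phi_k)$ yield trees with different numbers of $k$-labelled vertices. No reconstruction of the log, and no appeal to the deterministic selection rule, is needed there.

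The genuine gap is in your probability bound. You assert that ``distinct vertices of $\tau$ use disjoint entries of the resampling table'' and deduce $\Pr[\tau\text{ occurs}]\le\prod_{v}p_{\mathrm{label}(v)}$ by independence. That disjointness is exactly what fails in the non-extremal setting of Section~\ref{sec:nonextremal}: there $k\not\sim\ell$ no longer implies $\scp(\phi_k)\cap\scp(\phi_\ell)=\emptyset$, and two vertices at the same depth of a proper witness tree always carry $\sim$-non-adjacent labels yet may have overlapping scopes. As a result, (i) the table rows a vertex ``should'' read are no longer determined by the tree in the obvious way, since an intervening resampling of a non-neighbour sharing a variable advances that variable's row without leaving any trace below the vertex in question, and (ii) even in the formal check where each vertex consumes fresh rows by fiat, one must still argue that occurrence of $\tau$ in the actual run forces that check to pass. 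This is precisely where Axiom~3 (the commutativity, i.e.\ lopsidependency, condition) must do work in the probabilistic estimate, not only in the reconstruction step where you invoke it. As written, your argument establishes the theorem only for the extremal-style dependency graph in which $k\not\sim\ell$ forces disjoint scopes; to obtain the stated lopsided version you need the additional argument of \cite[\S6]{MT10}, or the commutativity framework of Kolmogorov~\cite{Kolmogorov}, showing that the product bound survives shared variables between non-neighbours.
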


\begin{lemma}
Suppose $G$ is a graph with $n$ vertices and maximum degree $\Delta$.  There exists $\lambda_\Delta>0$ such that the expected number of variable resamplings made during the execution of $\PRSforIS(G,\lambda)$ is $O(n)$ whenever $\lambda\leq\lambda_\Delta$.  Asymptotically, $\lambda_\Delta=\Theta(\Delta^{-1})$.
\end{lemma}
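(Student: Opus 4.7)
The plan is to apply Theorem~\ref{thm:MTbd} with the family of weights $x_S = (\beta z)^{|S|}$, where $z = \lambda/(1+\lambda)$ and $\beta > 1$ is a fixed constant (for concreteness I would take $\beta = e$). Axioms~1 and~2 were verified in the discussion preceding Algorithm~\ref{alg:PRSforIS}, and Axiom~3 in the remarks just before Theorem~\ref{thm:MTbd}, so it remains only to establish the Moser--Tardos inequality. Since $\Pr_\calD(\neg\phi_S) = z^{|S|}(1-z)^{|\partial S|} \leq z^{|S|}$, the inequality $\Pr_\calD(\neg\phi_S) \leq x_S \prod_{S' \sim S}(1 - x_{S'})$ reduces, on dividing through by $z^{|S|}$ and taking logarithms, to
\[
\sum_{S' \sim S} -\ln\bigl(1 - (\beta z)^{|S'|}\bigr) \leq |S|\ln\beta,
\]
which, via $-\ln(1-x) \leq 2x$ for $x \leq \tfrac12$, is in turn implied by $\sum_{S' \sim S}(\beta z)^{|S'|} \leq \tfrac12 |S|\ln\beta$.

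The key estimate is a bound on this dependency sum. Recall that $S \sim S'$ iff $S' \cap \overline{S} \neq \emptyset$, so every such $S'$ contains at least one vertex of $\overline{S}$, and $|\overline{S}| \leq (\Delta+1)|S|$. Invoking the classical bound that the number of connected vertex-subsets of size $c$ containing a fixed vertex in a graph of maximum degree $\Delta$ is at most $(e\Delta)^{c-1}$, I obtain
\[
\sum_{S' \sim S}(\beta z)^{|S'|} \leq (\Delta+1)|S|\sum_{c \geq 2}(e\Delta)^{c-1}(\beta z)^c = (\Delta+1)|S|\cdot\frac{e\Delta(\beta z)^2}{1 - e\Delta\beta z},
\]
valid whenever $e\Delta\beta z < 1$. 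Setting $z = \gamma/\Delta$ for a sufficiently small positive constant $\gamma = \gamma(\beta)$, the right-hand side is at most $\tfrac12|S|\ln\beta$ uniformly in $\Delta$, so the Moser--Tardos hypothesis holds. Translating back via $\lambda = z/(1-z)$ yields the claimed scaling $\lambda_\Delta = \Theta(1/\Delta)$.

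To conclude, I sum the per-clause bounds. Using the same connected-subgraph count together with the identity $\sum_v \sum_{S \ni v}(\beta z)^{|S|} = \sum_S |S|(\beta z)^{|S|} \geq \sum_S (\beta z)^{|S|}$, Theorem~\ref{thm:MTbd} gives
\[
\Ex(\text{\#iterations}) \leq \sum_S \frac{x_S}{1 - x_S} \leq 2\sum_S (\beta z)^{|S|} \leq 2n \sum_{c \geq 2}(e\Delta)^{c-1}(\beta z)^c = O(n)
\]
whenever $z = O(1/\Delta)$. The main obstacle is the combinatorial counting lemma on connected subgraphs containing a fixed vertex --- a standard but essential input --- together with the need to tune $\beta$ and $\gamma$ so as to simultaneously guarantee the Moser--Tardos condition, convergence of the geometric series, and the target $\Theta(1/\Delta)$ scaling of the activity.
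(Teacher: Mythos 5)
Your argument is correct, and it reaches the same conclusion as the paper, but by a genuinely different route. You take the ``generic'' asymmetric-LLL weights $x_S=(\beta z)^{|S|}$ and then control the dependency sum $\sum_{S'\sim S}x_{S'}$ by enumerating connected vertex sets: every $S'\sim S$ meets $\Sbar$, $|\Sbar|\le(\Delta+1)|S|$, and the number of connected $c$-subsets through a fixed vertex is at most $(e\Delta)^{c-1}$. All the subsequent manipulations (discarding the factor $(1-z)^{|\partial S|}$, the $-\ln(1-x)\le 2x$ step, the geometric series, and the final summation in which each $S$ is counted $|S|$ times to get $O(n/\Delta)=O(n)$ expected iterations) check out, and tuning $z=\gamma/\Delta$ does give a uniform verification of the Moser--Tardos condition and hence $\lambda_\Delta=\Theta(\Delta^{-1})$. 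The paper instead sets $x_S=\Pr_{\calD,\mu}(\neg\phi_S)$ for an artificially inflated activity $\mu>\lambda$ and exploits a structural feature of this example that your proof does not use: the events $\{\neg\phi_{S'}:S'\ni i\}$ are pairwise \emph{disjoint} (a vertex lies in at most one cluster), so $\sum_{S'\ni i}x_{S'}$ is exactly the probability that $i$ lies in a cluster at activity~$\mu$, which is at most $\Delta p^2$ with no subgraph enumeration whatsoever. That trick buys explicit and noticeably better constants (e.g.\ $\lambda_3\approx 0.088$), whereas your approach imports the standard counting lemma as an external ingredient and yields worse constants; on the other hand, yours is more portable, since it does not rely on the clusters-are-disjoint property and would apply verbatim to other cluster-type constraint families. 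The one thing you should tidy up is the counting lemma itself: it is standard, but since it carries the whole weight of your dependency-sum estimate you should either cite it precisely or include the short proof (e.g.\ via the Knuth/tree-embedding argument); the exact constant in $(e\Delta)^{c-1}$ is immaterial to the $\Theta(\Delta^{-1})$ asymptotics.
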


\begin{proof}
 Identifying vertices of $G$ with $[n]$, let 
$$
\calC=\big\{S\subseteq [n] :|S|\geq 2\text{ and }G[S]\text{ is connected}\big\}
$$
be the set of all subsets of $V(G)$ that induce connected subgraphs of $G$ with at least two vertices.  We need to find quantities $\{x_S:S\in\calC\}$ satisfying 
\begin{equation}\label{eq:XSbd}
\Pr_{\calD,\lambda}(\neg\phi_S)\leq x_S\prod_{S'\in\calC:S'\sim S}(1-x_{S'}).
\end{equation}
Note that we have included the activity $\lambda$ explicitly in the notation here, as we are about to introduce a second artificial activity~$\mu$.  We define the required quantities~$x_S$ by $x_S=\Pr_{\calD,\mu}(\neg\phi_S)$ for some suitably chosen $\mu$ (one that will make the right hand side of inequality~(\ref{eq:XSbd}) large), and then choose $\lambda<\mu$ as large as possible while still satisfying the inequality. The thinking here is that as $S$ varies, $x_S$ should shadow $\Pr_{\calD,\lambda}(\neg\phi_S)$, but with enough slack to allow inequality (\ref{eq:XSbd}) to be satisfied.  For convenience, let $q=\mu/(1+\mu)$.  We start with a preliminary calculation. For any $i\in[n]$,
\begin{align}
\sum_{S'\in\calC:i\in S'}x_{S'}&=\sum_{S'\in\calC:i\in S'}\Pr_{\calD,\mu}(\neg\phi_{S'})\notag\\
&=\Pr_{\calD,\mu}\left(\bigvee_{S'\in\calC:i\in S'}\neg\phi_{S'}\right)\label{eq:disjoint}\\
&=\Pr_{\calD,\mu}(\text{$i$ is contained in some cluster})\notag\\
&\leq \Delta q^2.\label{eq:incluster}
\end{align}
Equality~\eqref{eq:disjoint} follows from disjointness of the events $\neg\phi_{S'}$ over all $S'\in\calC$ with $i\in S'$ (by lopsided dependency).  Inequality \eqref{eq:incluster} is a simple upper bound on the probability that vertex~$i$ finds itself in a cluster.  

Now suppose that $S\in\calC$ and let $c=|S|$ and $b=|\partial S|$.  Then
\begin{align}
x_S\prod_{S'\sim S}(1-x_{S'})&\geq x_S\prod_{i\in S\cup\partial S}\,\prod_{S'\in\calC:i\in S'}(1-x_{S'})&\text{(by over-counting)}\label{eq:overcount}\\
&\geq x_S\prod_{i\in S\cup\partial S}\left(1-\sum_{S'\in\calC:i\in S'}x_{S'}\right)\notag\\
&\geq q^c(1-q)^b(1-\Delta q^2)^{b+c}&\text{(by \eqref{eq:incluster})}.\label{eq:rhs}
\end{align}  
This deals with the right hand side of (\ref{eq:XSbd}).  The left hand side is simply
\begin{equation}\label{eq:lhs}
\Pr_{\calD,\lambda}(\neg\phi_S)=p^c(1-p)^b,
\end{equation}
where $p$ stands for $\lambda/(1+\lambda)$.  Recall that we want to ensure that (\ref{eq:lhs}) is less than or equal to (\ref{eq:rhs}).  Since $q>p$, this goal is hardest to achieve, for any given~$c$, when $b$ is as large as possible.  Certainly $b\leq (\Delta-1)c$, so we assume $b=(\Delta-1)c$ from now on.  With this simplification, the inequality we wish to satisfy is 
$$
p^c(1-p)^{(\Delta-1)c}\leq q^c(1-q)^{(\Delta-1)c}(1-\Delta q^2)^{\Delta c},
$$
or, equivalently, 
\begin{equation}\label{eq:pandq}
p(1-p)^{(\Delta-1)}\leq q(1-q)^{(\Delta-1)}(1-\Delta q^2)^{\Delta}.
\end{equation}
We are free to choose $q$ as we like.  If we let $q=\Delta^{-1}$ then the right hand side is greater than $1/8\Delta$, enabling us to take $p=1/8\Delta$ and $\lambda=1/(8\Delta-1)$.  

Having shown that the premise of Theorem~\ref{thm:MTbd} holds, we can read off an upper bound on the expected number of resamplings.  Repeating an earlier trick,
\begin{align*}
\sum_{S\in\calC}x_S|S|&=\sum_{i\in V(G)}\sum_{S\in\calC:i\in S}x_{S}\\
&\leq n\Delta q^2&\text{(by \eqref{eq:incluster})}\\
&=n/\Delta.
\end{align*}
Noting that $|\partial S|\leq (\Delta-1)|S|$ and $x_S\leq\frac12$ we have that the expected total number of variable resamplings is
$$
\sum_{S\in\calC}\frac{x_S}{1-x_S}|S\cup\partial S|\leq \sum_{S\in\calC}2\Delta x_S|S|\leq 2n,
$$
by Theorem~\ref{thm:MTbd}.
\end{proof}
\begin{remark}
Of course, $\Delta^{-1}$ was merely a convenient choice for $q$ and not an optimal one.  When $\Delta=3$, we find numerically that the right hand side of (\ref{eq:pandq}) attains a maximum of $0.0892275+$ at around $q=0.172016$.  Thus, we can satisfy inequality (\ref{eq:pandq}) by setting $p=0.113551$, which its ensured by taking $\lambda_3=0.128$.  When $\Delta$ is large, a similar line of argument gives $\lambda_\Delta\sim C/\Delta$ asymptotically, where $x=C= 0.327+$ is the smallest solution to $xe^{-x}=\frac12e^{-3/4}$.  

Our calculation has some slack at a number of locations.  One easy win is to replace $\Delta q^2$ in (\ref{eq:incluster}) by the tighter, in fact exact, $q(1-(1-q)^\Delta)$.  Another arises from the following observation. Suppose $S\in\calC$ and $U_S\subset S$ is a minimum vertex cover in $G[S]$.  Then every $S'\in \calC$ with $S'\sim S$ either intersects $\partial S$ or $U_S$.  Thus, in (\ref{eq:overcount}) we may replace the sum over $i\in S\cup\partial S$ by a sum over $i\in U_S\cup\partial S$, and then replace $(1-\Delta q^2)^{b+c}$ in (\ref{eq:rhs}) by $(1-\Delta q^2)^{b/2+c}$, since a minimum vertex cover in $G[S]$ has size at most $\frac12|S|$.  Repeating the earlier calculation with these improvements we find that the right hand side of (\ref{eq:pandq}) achieves a maximum of $0.0990257+$ leading to $p=0.131189$ and $\lambda_3=0.150$.

There are several other steps of the calculation where slack is present, and could be reduced at the expense of additional combinatorial complexity.  
\end{remark}

To provide some context for the above working, we review the hard-core model on an infinite regular tree of degree~$\Delta$. It is known that this model exhibits a phase transition at $\lambda_c=(\Delta-1)^{\Delta-1}/(\Delta-2)^\Delta$. For $\lambda<\lambda_c$ there is a unique Gibbs measure and for $\lambda>\lambda_c$ there are two.  A remarkable discovery is that that $\lambda_c$ also marks a computational threshold of the hard-core model. On the one hand, Sly and Sun~\cite{SlySun} and Galanis, \v Stefankovi\v c and Vigoda~\cite{GSV16} showed that it is NP-hard to sample, even approximately, from the hard-core distribution in general graphs of maximum degree~$\Delta$, when $\lambda>\lambda_c$.  One the other hand, for the same class of graphs, approximate sampling is possible in time $O(n\log n)$ when $\lambda<\lambda_c$. This was shown by Chen, Liu and Vigoda~\cite{CLVoptimalIS}, building on the spectral independence approach of Anari, Liu and Oveis-Gharan~\cite{ALO-G}. 

Against this benchmark, the performance of PRS is unimpressive.  For $\Delta=3$, our $\lambda_3=0.150$ is woefully short of the computational threshold at $\lambda_c(3)=4$.  On the other hand, PRS is a perfect sampler and is certainly fast, making less than two resamplings per site in expectation.   The highest $\lambda$ for which linear-time perfect sampling is possible is unknown, but it is at least $\lambda=1/(\Delta-1)-\eps$, giving $\lambda=\frac12-\eps$ for $\Delta=3$~\cite[\S3.1]{AnandJerrum}.  It would be interesting to know whether the limit for linear time perfect sampling is $\lambda_c$, or whether there is a barrier below this.  

Extensive research on the algorithmic Lov\'asz Local Lemma has brought to light a number of alternatives to Theorem~\ref{thm:PRScorrect-nonextremal}. Examples which may be useful in analysing PRS algorithms have been given by  
Bissacot, Fern\'{a}ndez, Procacci, and Scoppola~\cite{BissacotEtAl},
Harris~\cite{Harris},
Harvey and Vondrak~\cite{HV15},
Iliopoulos~\cite{Iliopoulos},
Kolmogorov~\cite{Kolmogorov},
and Pegden~\cite{Pegden}. 
A comprehensive treatment of the circle of ideas surrounding the independent set polynomial and the Lov\'asz Local Lemma has been given by Scott and Sokal~\cite{ScottSokal}. 

\section{Generalisations}

In this article, we have restricted attention to the simplest version of PRS based directly on the Moser Tardos algorithmic LLL\null.  Specifically, we resample the variables of just one clause at each step.  This involved recasting the `obvious' encoding of a problem as a CNF formula in a form suitable for application of the method.  For example, in the case of independent sets, we replaced the natural two-variable clauses by larger clauses based on clusters.  Alternatively, it is possible to stick with the `natural' formula at the expense of complicating the resampling algorithm.  This was the approach originally taken by Guo, Jerrum and Liu~\cite{GJL19}.  

We dealt here exclusively with hard constraints which either permit or deny a particular assignment to the variables.  Soft constraints can be incorporated by introducing an auxiliary variable taking values in the real interval $[0,1]$.  Applied to the Ising model, for example, one would end up with a representation akin to that of Edwards and Sokal~\cite{EdwardsSokal}.   Alternatively, Feng, Vishnoi and Yin~\cite{FengVishnoiYin}  incorporated soft constraints directly, thereby allowing a wider range of spin systems to be addressed more naturally.  Another possible extension is to continuous state spaces, with Guo and Jerrum~\cite{GJ18b} treating the hard-disks model, and Moka and Kroese~\cite{MokaKroese} more general point processes.  Feng, Guo and Yin~\cite{FengGuoYin} show how to achieve perfect sampling when strong spatial mixing holds.  This last work is quite far from PRS, but still relies on growing a sample by repeatedly `repairing' parts of the current configuration.  

\section*{Acknowledgements}
The treatment of PRS presented here draws on many sources, in some cases heavily.  Particularly influential are the works of Moser and Tardos~\cite{MT10}, Knuth~\cite{KnuthVol4Fasc6}, Kolipaka and Szegedy~\cite{KS11} and Viennot~\cite{Viennot}.  I also learned a great deal through collaboration with Heng Guo.  Finally, in retrospect, it is remarkable how many of the ideas behind PRS were already present in the work of Propp and Wilson~\cite{PW98} on cycle-popping.

\bibliographystyle{plain}
\bibliography{notesOnRejectionSampling}

\end{document}